\newtheorem{thm}{Theorem}
\newtheorem{prop}{Proposition}
\newtheorem{lem}{Lemma}
\providecommand{\eref}[1]{\eqref{eq:#1}}  % call \eqref from amstex
\providecommand{\cref}[1]{Chapter~\ref{chap:#1}}
\providecommand{\sref}[1]{Section~\ref{sec:#1}}
\providecommand{\R}{\ensuremath{\mathbb{R}}}
\providecommand{\C}{\ensuremath{\mathbb{C}}}
\renewcommand{\S}{\ensuremath{\mathbb{S}}}
\providecommand{\Z}{\ensuremath{\mathbb{Z}}}
\providecommand{\abs}[1]{{\left|#1\right|}}
\providecommand{\norm}[1]{\left\lVert#1\right\rVert}
\providecommand{\inprod}[1]{\left\langle#1\right\rangle}
\providecommand{\set}[1]{\left\{#1\right\}}
\providecommand{\bydef}{\overset{\text{def}}{=}}
\providecommand{\diag}{\mathop{\mathrm{diag}}}
\providecommand{\parder}[2]{\frac{\partial{#1}}{\partial{#2}}}
\providecommand{\di}{\ensuremath{~\text{d}}}
\providecommand{\I}{\ensuremath{\mathrm{j}}}
\renewcommand{\vec}[1]{\ensuremath{\bm{#1}}}
\providecommand{\mat}[1]{\ensuremath{\bm{#1}}}
\providecommand{\wh}[1]{\ensuremath{\widehat{#1}}}
\providecommand{\wt}[1]{\ensuremath{\widetilde{#1}}}
\providecommand{\mA}{\mat{A}} \providecommand{\mB}{\mat{B}}
\providecommand{\mI}{\mat{I}}
 \providecommand{\mR}{\mat{R}}
 \providecommand{\mU}{\mat{U}} 
\providecommand{\mV}{\mat{V}} \providecommand{\mT}{\mat{T}}
 \providecommand{\mG}{\mat{G}}
\providecommand{\mX}{\mat{X}}\providecommand{\mY}{\mat{Y}}
\providecommand{\mZ}{\mat{Z}}
 \providecommand{\vb}{\vec{b}}
\providecommand{\vc}{\vec{c}} \providecommand{\vd}{\vec{d}}
\providecommand{\ve}{\vec{e}} \providecommand{\vf}{\vec{f}}
\providecommand{\vg}{\vec{g}}
\providecommand{\vh}{\vec{h}}
 \providecommand{\vr}{\vec{r}}
\providecommand{\vs}{\vec{s}}
\providecommand{\vx}{\vec{x}} 
 \providecommand{\vzero}{\vec{0}}
 \providecommand{\vv}{\vec{v}}
\providecommand{\conv}{\ast}
\newcommand{\shf}{\wh{f}}
\newcommand{\shg}{\wh{g}}
\newcommand{\shh}{\wh{h}}
\newcommand{\shv}{\wh{v}}
\newcommand{\shy}{\wh{y}}
\newcommand{\trace}{\mathrm{trace}}
\newcommand{\mXi}{\mat{\Xi}}
\newcommand{\T}{\top}
\renewcommand{\H}{H}
\newcommand{\SO}{\ensuremath{\mathbb{SO}}}
\newcommand{\supp}{\ensuremath{\mathop{\mathrm{supp}}}}
\newcommand{\mDelta}{\bm{\Delta}}
\newcommand{\expect}{\mathbb{E}}
\begin{document}
% ----------------------------------------------------------------------------
% PAPER TITLE
% ----------------------------------------------------------------------------
\title{Sampling Sparse Signals on the Sphere:\\Algorithms and Applications}
% ----------------------------------------------------------------------------

% ----------------------------------------------------------------------------
% AUTHORS
% ----------------------------------------------------------------------------
\author{Ivan~Dokmani\'c,~\IEEEmembership{Student Member,~IEEE,}
        and~Yue~M.~Lu,~\IEEEmembership{Senior Member,~IEEE}% <-this % stops a space
\thanks{ I. Dokmani\'{c} is with the School of Computer and Communication
Sciences, Ecole Polytechnique F\'{e}d\'{e}rale de Lausanne (EPFL), CH-1015
Lausanne, Switzerland (e-mail: ivan.dokmanic@epfl.ch@epfl.ch). His research
was supported by an ERC Advanced Grant---Support for Frontier
Research---SPARSAM Nr: 247006, and a Google PhD Fellowship.}%
\thanks{Y. M. Lu is with the School of Engineering and Applied Sciences, Harvard University, Cambridge, MA 02138 USA (e-mail: yuelu@seas.harvard.edu). He was supported in part by the U.S. National Science Foundation under grant CCF-1319140.}%
\thanks{Preliminary material in this paper will be presented at the 40th IEEE International Conference on Acoustics, Speech and Signal Processing (ICASSP), 19--24 April 2015, in Brisbane, Australia.}}%
% ----------------------------------------------------------------------------

% ----------------------------------------------------------------------------
% The paper headers
% ----------------------------------------------------------------------------
\markboth{}%
{Dokmani\'c and Lu: Sampling Sparse Signals on the Sphere}
% ----------------------------------------------------------------------------

\maketitle

\begin{abstract}
  We propose a sampling scheme that can perfectly reconstruct a collection of
  spikes on the sphere from samples of their lowpass-filtered observations.
  Central to our algorithm is a generalization of the annihilating filter
  method, a tool widely used in array signal processing and
  finite-rate-of-innovation (FRI) sampling. The proposed algorithm can
  reconstruct $K$ spikes from $(K+\sqrt{K})^2$ spatial samples. This sampling
  requirement improves over previously known FRI sampling schemes on the
  sphere by a factor of four for large $K$.

  We showcase the versatility of the proposed algorithm by applying it to
  three different problems: 1) sampling diffusion processes induced by
  localized sources on the sphere, 2) shot noise removal, and 3) sound source
  localization (SSL) by a spherical microphone array. In particular, we show
  how SSL can be reformulated as a spherical sparse sampling problem.
\end{abstract}

% Note that keywords are not normally used for peerreview papers.
\begin{IEEEkeywords}
  Sphere, sparse sampling, diffusion sampling, sound source localization,
  annihilation filter, finite rate of innovavtion, shot noise removal,
  spherical harmonics
\end{IEEEkeywords}

% ================================================================================

\section{Introduction}

\IEEEPARstart{N}{umerous} signals live on a sphere. Take, for example, any
signal defined on Earth's surface
\cite{Evans:1998ii,Simons:2006gg,Audet:2011ei}. Signals from space measured on
Earth
\cite{Jarosik:2011dy,MacPhie:1975bj} also have a spherical domain. In
acoustics, spherical microphone arrays output a time-varying signal supported
on a sphere \cite{Meyer:2002bl,Jarrett:2012fi}, while in diffusion weighted
magnetic resonance imaging fiber orientations live on a sphere
\cite{Tournier:2004bz}. In practice, we only have access to a finite number of
samples of such signals. Thus, sampling and reconstruction of spherical
signals is an important problem.

Just as signals in Euclidean domains can be expanded via sines and cosines,
one can naturally represent spherical signals in the Fourier domain via
\emph{spherical harmonics} \cite{Driscoll:1994bp}. A signal is
\emph{bandlimited} if
it is a linear combination of finitely many spherical harmonics. Sampling
bandlimited signals on the sphere has been studied extensively: for signals
bandlimited to spherical harmonic degree $L$, Driscoll and Healy
\cite{Driscoll:1994bp} proposed a sampling theorem that requires $4L^2$
spherical samples. The best exact general purpose sampling theorem due to
McEwen and Wiaux uses $2L^2$ samples
\cite{McEwen:2011ib}. Recently, Khalid, Kennedy and McEwen devised a stable
sampling scheme that requires the optimal number of samples, $L^2$
\cite{Khalid:2014is}.

In this paper, we study the problem of sampling  localized spikes on the
sphere; in the limit, the spikes become Dirac delta functions. Such sparse
signals on the sphere are encountered in many problems. For example, various
acoustic sources are well-approximated by point sources; the directional
distribution of multiple sources is then a finite collection of spikes. Stars
in the sky observed from Earth are angular spikes, and so are plume sources on
Earth.

Localized spikes are not bandlimited, so the bandlimited sampling theorems
\cite{Driscoll:1994bp,McEwen:2011ib,Khalid:2014is} do not apply. In this
paper, we propose an algorithm to perfectly reconstruct collections of spikes
from their lowpass-filtered observations. Our algorithm efficiently
reconstructs $K$ spikes when the bandwidth of the lowpass filter is at least
$K + \sqrt{K}$.

\subsection{Prior Art}

Our work is in the same spirit as finite rate-of-innovation (FRI) sampling,
introduced by Vetterli, Marziliano, and Blu \cite{Vetterli:2002bs}. They
showed that a stream of $K$ Diracs on the line can be efficiently recovered
from $2K+1$ samples. Initially developed for 1D signals, the original FRI sampling was extended to
2D and higher-dimensional signals in \cite{Maravic:2004bz,Shukla:ca}, and its
performance was studied in noisy conditions \cite{Maravic:2005fd,Dragotti:2009es}.

In a related work
\cite{DeslauriersGauthier:2013ij,DeslauriersGauthier:2013be},
Deslauriers-Gauthier and Marziliano proposed an FRI sampling scheme for
signals on the sphere, reconstructing $K$ Diracs from $4K^2$ samples. Their
motivating application is the recovery of the fiber orientations in diffusion
weighted magnetic resonance imaging
\cite{DeslauriersGauthier:2012if, Tournier:2004bz}. They further show that if
only $3K$ spectral bins are active, the required number of samples can be
reduced to $3K$. Sampling at this lower rate, however, relies on the
assumption that we can apply arbitrary spectral filters to the signal before
sampling. This is known as spatial anti-aliasing---a procedure that is
generally challenging or impossible to implement in most applications
involving spherical signals, where we only have access to finite samples of the
underlying continuous signals.\footnote{This is not to be confused with
spatial anti-aliasing in image downsampling, where we do have access to all
pixels.}

In many applications, the sampling kernels (\emph{i.e.}, the lowpass
filters) through which we observe the spikes are provided by some underlying
physical process (\emph{e.g.}, point spread functions and Green's functions).
These kernels are often approximately bandlimited, but we cannot further
control or design the spectral selectivity of these kernels. This
impossibility of arbitrary spatial filtering suggests that our goal is to
reduce the required bandwidth, or more practically, to maximize the number of
spikes that we can reconstruct at a given bandwidth.

Recently, Bendory, Dekel and Feuer proposed a spherical super-resolution
method \cite{Bendory:2014tl, Bendory:2014gp}, extending the results of
Cand\`es and Fernandez-Granda \cite{Candes:2014br} to the spherical domain.
They showed that an ensemble of Diracs on the sphere can be reconstructed from
projections onto a set of spherical harmonics by solving a semidefinite
program, provided that the Diracs satisfy a minimal separation condition. When
the Diracs are constrained to a discrete set of locations, their formulation
allows them to bound the recovery error in the presence of noise. Our
non-iterative (thus very fast) algorithm based on FRI does not require any
separation between the Diracs. We also allow the weights to be complex, which
may be important in applications (for an example on sound source localization,
see Section \ref{sub:ssl}). On the other hand, we need to assume that the
number of Diracs is known a priori (or that it can be estimated through other
means), whereas in \cite{Bendory:2014gp,Bendory:2014tl} no such assumption is
necessary.

\subsection{Outline and Main Contributions}

We start by reviewing some basic notions of harmonic analysis on the sphere in
Section \ref{sec:sph}. We then present the main result of this work in Section
\ref{sec:spherefri}\,: A collection of $K$ Diracs on the sphere can be
reconstructed from its lowpass filtered version, provided that the bandwidth
of the sampling kernel is at least $K + \sqrt{K}$. This bandwidth requirement
also implies that $(K+\sqrt{K})^2$ spatial samples taken at generic locations
suffice to reconstruct the $K$ Diracs. We establish this result by
constructing a new algorithm for spherical FRI sampling. Compared to $4K^2$
samples as required in a previous work \cite{DeslauriersGauthier:2013ij}, our
algorithm reduces the numbers of samples via a more efficient use of the
available spectrum. For large $K$, the required number of samples is reduced by
a factor of up to 4. The proposed algorithm is first developed for the
noiseless case. Procedures to improve the robustness of the algorithm in noisy
situations are presented in Section~\ref{sub:denoising_str}, and we compare the
performance of the algorithm with the Cram\'er-Rao lower bound
\cite{Kay:1998wba} in Section \ref{sub:crlb}. Section~\ref{sec:applications}
presents the applications of the proposed algorithm to three problems: 1)
sampling diffusion processes on the sphere, 2) shot noise removal, and 3)
sound source localization. These diverse applications demonstrate the
usefulness and versatility of our results. We conclude in \sref{conclusion}.

This paper follows the philosophy of reproducible research. All the results
and examples presented in the paper can be reproduced using the code available
at \url{http://lcav.epfl.ch/ivan.dokmanic}.

% ================================================================================

\section{Harmonic Analysis on the Sphere and \\ Problem Formulation}

\label{sec:sph}

\subsection{Spherical Harmonics}

We briefly recall the definitions of spherical harmonics and spherical
convolution. The 2-sphere is defined as the locus of points in $\R^3$ with unit
norm,
\begin{equation*}
  \S^2 \bydef \set{\vx \in \R^3~|~\vx^\T \vx = 1}.
\end{equation*}

In what follows, we often use $\xi$ to represent a generic point on the
sphere. In addition to the standard Euclidean representation $\xi = [x, \ y, \
z]^\T$, points on $\S^2$ can also be conveniently parameterized by angles of
colatitude and azimuth, \emph{i.e.}, $\xi = (\theta, \phi)$, with $\theta$
measured from the positive $z$-axis, and $\phi$ measured in the $xy$ plane
from the positive $x$-axis. The two equivalent representations are related by
the following conversion,
\begin{equation}
\begin{aligned}
  \label{eq:polar_conversion}
  x &= \sin(\theta) \cos(\phi), \\
  y &= \sin(\theta) \sin(\phi), \\
  z &= \cos(\theta).
\end{aligned}
\end{equation}

The Hilbert space of square-integrable functions on the sphere, $L^2(\S^2)$, is
defined through the corresponding inner product. For two functions $f, g \in L^2(\S^2)$ we
have
\begin{equation}
  \label{eq:inner_product}
  \inprod{f,\ g} \bydef \int_{\S^2} f(\xi) \overline{g(\xi)} \di \xi,
\end{equation}
where $\di \xi = \sin(\theta) \di \theta \di \phi$ is the usual
rotationally invariant measure on the sphere. With respect to this inner
product, spherical harmonics form a natural orthonormal Fourier basis for
$L^2(\S^2)$. They are defined as \cite{Driscoll:1994bp}
\begin{equation}
  \label{eq:sph}
  Y_{\ell}^m(\theta, \phi) = N_{\ell}^m P_{\ell}^{\abs{m}}(\cos \theta) e^{\I m \phi},
\end{equation}
where the normalization constant is
\begin{equation}
  \label{eq:norm}
  N_{\ell}^m = (-1)^{(m + \abs{m})/2} \sqrt{\frac{(2\ell+1)}{4\pi} \frac{(l-\abs{m})!}{(l+\abs{m})!}},
\end{equation}
and $P_{\ell}^m(x) $ is the associated Legendre polynomial of degree $\ell$
and order $m$. Note that different communities sometimes use different
normalizations and sign conventions in the definitions of spherical harmonics
and associated Legendre polynomials. As long as applied consistently, the
choice of convention does not affect our results.\footnote{It is common to
write the spherical harmonic order $m$ in the superscript. We will keep this
convention for the associated Legendre polynomials $P_{\ell}^{\abs{m}}$,
spherical harmonics $Y_{\ell}^m$, normalization constants $N_\ell^m$ and the
spherical Fourier coefficients $\shf_\ell^m$. It is not to be confused with
integer powers such as $x^\ell$.}

In this paper, we adopt the following definition
\begin{equation}
  \label{eq:associated_legendre} P_{\ell}^m(x) \bydef (-1)^m (1-x^2)^{m/2}
  \frac{d^m}{dx^m}P_{\ell}(x), \ \text{for~} m \geq 0,
\end{equation}
where $P_{\ell}(x)$ is the Legendre polynomial of degree $\ell$
\cite{Abramowitz:1972vwa}.

Any square integrable function on the sphere, $f \in L^2(\S^2)$, can be expanded
in the spherical harmonic basis,
\begin{equation}
  \label{eq:ExpansionSH}
  f(\theta, \phi) = \sum_{\ell=0}^{\infty} \sum_{\abs{m} \leq \ell} \shf_{\ell}^m Y_{\ell}^m (\theta, \phi).
\end{equation}
The Fourier coefficients are computed as
\begin{equation}\label{eq:Fourier_transform}
  \shf_{\ell}^m = \inprod{f, Y_{\ell}^m} = \int_{\S^2} f(\xi) \overline{Y_{\ell}^m(\xi)} \di \xi.
\end{equation}
The coefficients $\big[ \shf_{\ell}^m, (\ell, m) \in \mathcal{I}  \big]$ form a
countable set supported on an infinite triangle of indices,
\begin{equation}
  \label{eq:triangle}
  \mathcal{I} = \set{(\ell, m) \in \Z^2 \ | \ \ell \geq 0, \abs{m} \leq \ell}.
\end{equation}
We say that $f$ is \emph{bandlimited} with bandwidth $L$ if $\shf_\ell^m = 0$ for $\ell
\geq L$. Often we think of $L$ as the smallest integer such that this
holds. For a bandlimited function, the triangle $\mathcal{I}$ is cut off at
$\ell = L$. In what follows, we use
\begin{equation}
\label{eq:triangle_L}
\mathcal{I}_L \bydef \set{(\ell, m) \in \Z^2 \ | \ 0 \le \ell < L, \abs{m} \leq \ell}
\end{equation}
to represent the spectral support of a bandlimited function with bandwidth
$L$. The set $\mathcal{I}_L$ contains $L^2$ indices, so we can represent the
spectrum as an $L^2$-dimensional column vector
\begin{equation}
  \label{eq:vectorized_spectrum}
  \wh{\vf} \bydef \left[\shf_0^{\; 0},\ \shf_1^{-1},\ \shf_1^{\; 0}, \ \shf_1^{\; 1},\ \ldots, \ \shf_{L-1}^{-L+1},\ \ldots, \ \shf_{L-1}^{\; L-1}\right]^\T.
\end{equation}

\subsection{Rotations and Convolutions on the Sphere}

Let $\SO_3$ denote the group of rotations in $\R^3$; any rotation $\varrho
\in \SO_3$ is parameterized by three angles that specify rotations about three
distinct axes. Thus we can write $\varrho =
\varrho(\alpha,\beta,\gamma)$. The commonest parameterization is called
\emph{Euler angles} \cite{Varshalovich:1989ul}.

Counter-clockwise rotation of a vector $\vx \in \R^3$ about the $z$-axis is
achieved by multiplying $\vx$ by the corresponding rotation matrix,
\begin{equation*}
  \mR_z(\alpha) = \begin{bmatrix}
  \cos \alpha & -\sin \alpha & 0 \\
  \sin \alpha & \cos \alpha & 0 \\
  0 & 0 & 1 
  \end{bmatrix},
\end{equation*}
where $\alpha$ is the rotation angle. Rotation matrices around axes $x$ and
$y$ can be defined analogously.

We use $\Lambda(\varrho)$ to represent the rotation operator corresponding to
$\varrho$, that acts on spherical functions. Thus for $f$ a function on the
sphere, $\Lambda(\varrho) f$ represents the rotated function, defined as
\begin{equation}
  [\Lambda (\varrho) f] (\xi) \bydef f(\varrho^{-1} \circ \xi),
\end{equation} where $\rho^{-1}$ is the inverse rotation of $\rho$, and by
$\varrho^{-1} \circ \xi$ we mean pre-multiplying by $\mR(\varrho^{-1})$ the
unit column vector corresponding to $\xi$,
\emph{cf.} \eqref{eq:polar_conversion}. Compare this definition with the
Euclidean case where shifting the argument to the left (subtracting a positive
number) results in the shift of the function to the right.

There are various definitions of convolution on the sphere, all being
non-commutative. One function, call it $f$, provides the weighting for the
rotations of the other function $h$. A standard definition is then
\cite{Driscoll:1994bp,Dokmanic:2010du}
\begin{equation}
\label{eq:convolution}
\begin{aligned}
  \left[ f \conv h \right] (\xi) 
  & \bydef \left[ \left( \frac{1}{2\pi} \int_{\SO_3} \di \varrho \cdot f(\varrho \circ \eta) \cdot \Lambda(\varrho) \right) h \right] (\xi) \\
  &= \frac{1}{2\pi} \int_{\SO_3} f(\varrho \circ \eta) h(\varrho^{-1} \circ \xi) \di \varrho,
\end{aligned}
\end{equation}
where $\eta \in \S^2$ is the north pole. It is easy to verify that this
definition generalizes the standard convolution in Euclidean spaces, with the
rotation operator $\varrho$ playing the same role as translations do on the
line. Because the spherical convolution is not commutative, it is important to
fix the ordering of the arguments. In our case, the second argument---$h$ in
\eqref{eq:convolution}---will always be the filter, \emph{i.e.}, the
observation kernel.

The familiar convolution--multiplication rule in standard Euclidean domains
holds for spherical convolutions too. It can be shown \cite[Theorem
1]{Driscoll:1994bp} that for any two functions $f, h \in L^2(\S^2)$, the
Fourier transform of their convolution is a pointwise product of the
transforms, \emph{i.e.},
\begin{equation}
  \label{eq:ConvSpectrum}
  (\widehat{f \conv h})_\ell^m = \sqrt{\frac{4\pi}{2 \ell + 1}} \ \shf_\ell^m \ \shh_\ell^{0}.
\end{equation}

We note that $f$ can also be a generalized function (a distribution). In
particular, we consider spherical Dirac delta functions, defined as \cite{Duffy:2001uy}
\begin{equation}\label{eq:delta}
  \delta(\theta, \phi; \theta_0, \phi_0) = \frac{\delta(\theta - \theta_0) \delta(\phi - \phi_0)}{\sin(\theta)},
\end{equation}
and weighted sums of Dirac deltas. To lighten the notation, we often write
$\delta(\xi; \xi_0)$. With the definition in \eqref{eq:delta}, it is ensured
that
\begin{equation}
  \int_{\S^2} \delta(\xi; \xi_0) \di \xi = 1, \ \forall \xi_0 \in \S^2.
\end{equation}

\subsection{Problem Formulation}

Consider a collection of $K$ Diracs on the sphere
\begin{equation}
  \label{eq:model}
  f(\xi) = \sum_{k=1}^{K} \alpha_k \delta(\xi; \xi_k),
\end{equation}
where the weights $\set{\alpha_k \in \C}_{k=1}^K$ and the locations of the Diracs
$\set{\xi_k = (\theta_k, \phi_k)}_{k=1}^K$ are all unknown parameters. Let $y(\xi)$ be
a filtered version of $f(\xi)$, \emph{i.e.},
\[
y(\xi) = [f \ast h](\xi),
\]
where the filter (or sampling kernel) $h(\xi)$ is a bandlimited function with bandwidth $L$. We further assume that the spherical Fourier transform of $h(\xi)$ is nonzero within its spectral support, \emph{i.e.}, $\widehat{h}_{\ell}^m \neq 0$ for all $\ell < L$. Given spatial samples of $y(\xi)$, we would like to reconstruct $f(\xi)$, or equivalently, to recover the unknown parameters $\set{(c_k, \xi_k)}_{k=1}^K$.

Since the filtered signal $y(\xi)$ is bandlimited, we can use bandlimited sampling theorems on the sphere (\emph{e.g.}, \cite{Driscoll:1994bp,McEwen:2011ib}) or direct linear inversion (see Section \ref{sub:from_samples_to_spectrum}) to recover its Fourier spectrum $\widehat{y}_{\ell}^m$ from its spatial samples of sufficient density. Using the convolution-multiplication identity in \eref{ConvSpectrum}, we can then recover the lowpass subband of $f(\xi)$ as
\[
\shf_{\ell}^m = \left[ (2\ell + 1)/(4\pi) \right]^{1/2} \cdot \left( \shy_{\ell}^{\,m} \, / \, \shh_{\ell}^{0} \right),
\]
for $0 \le \ell < L$ and $\abs{m} \le \ell$. Being a collection of Diracs, $f \notin L^2(\S^2)$, but its Fourier transform $\shf_{\ell}^m$ can still be computed via \eref{Fourier_transform} in the sense of distributions as

\begin{equation}
\begin{aligned}
  \label{eq:DiracSpectrum}
  \shf_{\ell}^m &= \sum_{k=1}^K \alpha_k \overline{Y_{\ell}^m(\theta_k, \phi_k)}\\
  	&= N_{\ell}^m \sum_{k=1}^K \alpha_k  P_{\ell}^{\abs{m}}(\cos \theta_k) e^{-\I m \phi_k}.
\end{aligned}
\end{equation}

The problems we address in this paper can now be stated as follows: Can we
reconstruct a collection of $K$ Diracs on the sphere from its Fourier
coefficients $\shf_{\ell}^m$ in the lowpass subband $\mathcal{I}_L$ as defined
in \eref{triangle_L}? If so, then what is the minimum bandwidth $L$ that
allows us to do so? In practice, the sampling kernel is often given and not
subject to our control. In this case, the previous question can be
reformulated as determining the maximum number of spikes that we can
reconstruct at a given bandwidth $L$.

%================================================================================

\section{Sampling Spherical FRI Signals}

\label{sec:spherefri}

In this section we address the questions stated above.  Our main result can be
summarized in the following theorem:
\begin{thm}
  \label{thm:sfri}
  Let $f$ be a collection of $K$ Diracs on the sphere $\S^2$, with complex
  weights $\set{\alpha_k}_{k=1}^K$ at locations $\set{\xi_k = (\theta_k,
  \phi_k)}_{k=1}^{K}$, as in \eqref{eq:model}. Convolve $f$ with a
  bandlimited sampling kernel $h_L$, where the bandwidth $L \ge  K + \sqrt{K}$,
  and sample the resulting signal $[f \conv h_L](\xi)$ at $L^2$ points
  $\set{\psi_n \in \S^2}_{n=1}^{L^2}$ chosen uniformly at random on $\S^2$. Then almost
  surely the samples
  \begin{equation*}
    f_n = [f \conv h_L](\psi_n), \quad n = 1, \ldots, L^2
  \end{equation*}
  are a sufficient characterization of $f$.
\end{thm}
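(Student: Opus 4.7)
The plan is to split the argument into two steps: first pass from the $L^2$ spatial samples to the low-pass spectrum $\{\shf_\ell^m : (\ell,m)\in\mathcal{I}_L\}$, and then recover the $K$ Dirac parameters from this truncated spectrum.

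For the first step, since $h_L$ is bandlimited with bandwidth $L$, so is $y = f \conv h_L$, which therefore lies in the $L^2$-dimensional subspace of $L^2(\S^2)$ spanned by $\{Y_\ell^m : (\ell,m) \in \mathcal{I}_L\}$. Collecting the samples into a vector gives a linear system $\vec{f} = \Psi \wh{\vec{y}}$ with $\Psi \in \C^{L^2 \times L^2}$ and $[\Psi]_{n,(\ell,m)} = Y_\ell^m(\psi_n)$. I would argue that $\det \Psi$, viewed as a function of $(\psi_1,\ldots,\psi_{L^2}) \in (\S^2)^{L^2}$, is real analytic and not identically zero: any fundamental system of $L^2$ nodes for spherical-harmonic interpolation of degree $L-1$ provides an explicit configuration with $\det \Psi \neq 0$, and such systems are known to exist. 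By analyticity, $\det \Psi \neq 0$ almost surely under the uniform product measure, so $\wh{\vec{y}}$ is uniquely determined by the samples; the convolution--multiplication identity \eref{ConvSpectrum} together with the hypothesis $\shh_\ell^0 \neq 0$ for $\ell < L$ then recovers $\shf_\ell^m$ on all of $\mathcal{I}_L$.

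The second step is where the real work lies. Writing $u_k = \cos\theta_k$ and $z_k = \emi{\phi_k}$, formula \eref{DiracSpectrum} gives $\shf_\ell^m / N_\ell^m = \sum_{k=1}^K \alpha_k \, P_\ell^{\abs{m}}(u_k) \, z_k^m$ on $\mathcal{I}_L$. The azimuthal factors $z_k^m$ form the expected Prony structure along the $m$-direction, so my plan is to construct an annihilating filter of length $K+1$ whose zeros are $\{z_k\}_{k=1}^K$, acting along $m$. The obstruction is that the amplitudes $\alpha_k P_\ell^{\abs{m}}(u_k)$ depend on $\abs{m}$, so no single row of the spectrum yields a classical Prony identity. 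I would handle this by assembling all admissible rows (one per degree $\ell$ in a suitable range) into a joint Hankel-type system in which one filter must annihilate every row simultaneously, while the $m$-dependent Legendre factors enter as additional row-wise unknowns that are absorbed in the system. Once the filter is identified, its $K$ roots furnish the azimuths $\{\phi_k\}$; plugging these back in, the remaining unknowns $\{u_k, \alpha_k\}$ reduce to a structured linear problem along the $\ell$-direction that can be handled by the three-term Legendre recurrence.

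The bandwidth bound $L \geq K + \sqrt{K}$ reflects the triangular support $\abs{m} \le \ell$: only $2\ell+1$ azimuthal samples are available at degree $\ell$, so the joint annihilation system overdetermines the $K$ filter taps only once the range of usable degrees is wide enough. A careful accounting of the equations produced by rows with $\ell$ spanning $[K, L-1]$ gives the threshold $L \geq K + \sqrt{K}$. The hardest part will be proving that this joint system has rank exactly $K$ at generic Dirac positions, so that the common annihilating filter is uniquely determined up to scaling; I plan to reduce this to showing that a certain structured determinant built from $\{(\theta_k,\phi_k)\}_{k=1}^K$ is a nonzero real-analytic function on $(\S^2)^K$, which can be established by exhibiting one explicit configuration for which the determinant is nonzero and then invoking the identity principle for real-analytic functions.
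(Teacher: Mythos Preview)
Your first step---recovering the low-pass spectrum from $L^2$ random samples via invertibility of the sampling matrix---matches the paper's argument and is fine.

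The second step, however, has a genuine gap. You propose to annihilate along the $m$-direction with a filter whose roots are $z_k = \emi{\phi_k}$. But as you yourself note, the weights in
\[
\shf_\ell^m / N_\ell^m \;=\; \sum_{k=1}^K \alpha_k\, P_\ell^{\abs{m}}(u_k)\, z_k^m
\]
depend on $\abs{m}$ through the Legendre factor $P_\ell^{\abs{m}}(u_k)$. An annihilating filter with roots $\{z_k\}$ kills a sequence only if it is of the form $\sum_k c_k z_k^m$ with $c_k$ \emph{independent of $m$}; here the coefficients change at every step, so the convolution $h \ast (\shf_\ell^{\,\cdot})$ is not zero for any nontrivial $h$ of length $K+1$. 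Your proposal to ``absorb the $m$-dependent Legendre factors as additional row-wise unknowns'' does not repair this: the variation is along the very index you are filtering, not across rows, so no amount of stacking rows $\ell$ produces a common null vector. Consequently the counting that would yield $L \ge K+\sqrt{K}$ never gets off the ground, and the downstream plan (azimuths first, then colatitudes via the Legendre recurrence) is built on a step that does not execute.

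The paper resolves this by a change of basis \emph{before} annihilating, and by annihilating in the \emph{colatitude} direction rather than the azimuthal one. Writing $P_\ell^{\abs m}(\cos\theta)$ as $(\sin\theta)^{\abs m}$ times a polynomial of degree $\ell-\abs m$ in $x=\cos\theta$, one shows (Lemma~\ref{lem:fullrank}) that the spectrum $\{\shf_\ell^m:(\ell,m)\in\mathcal{I}_L\}$ is in linear bijection with the triangular part of a ``data matrix'' $\mDelta$ whose entries are
\[
d_{pm} \;=\; \sum_{k=1}^K \bigl[\alpha_k(\sin\theta_k)^{\abs m}\emi{m\phi_k}\bigr]\,(\cos\theta_k)^p, \qquad 0\le p < L-\abs{m}.
\]
For each fixed $m$ this \emph{is} a clean sum of exponentials in $p$ with bases $x_k=\cos\theta_k$ and $m$-constant amplitudes, so a single length-$(K{+}1)$ filter with roots $\{x_k\}$ annihilates every column. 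Stacking the windowed columns and counting length-$(K{+}1)$ segments available in the triangle gives $(L-K)^2$ rows, and requiring $(L-K)^2 \ge K$ is exactly $L \ge K+\sqrt{K}$. Colatitudes are recovered first (as roots of the filter), then azimuths and amplitudes follow linearly. A preliminary random rotation guarantees almost surely that the $\cos\theta_k$ are distinct and nonpolar, which is what makes the Vandermonde structure full rank.
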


We provide a constructive proof of this theorem by presenting an algorithm
that can efficiently recover $K$ localized spikes from $L^2$ samples, where $L
\geq K + \sqrt{K}$. Before presenting the algorithm and the proof, we first
define some relevant notation and state two lemmas.

%===========================================
\subsection{From Samples to the Fourier Transform}
\label{sub:from_samples_to_spectrum}

Our algorithms perform computation with spectral coefficients. In practice, we
have access to spatial samples of the function, so we need a procedure to
convert between the spatial and the Fourier representations. We first describe
a method to compute the Fourier transform from samples taken at generically
placed sampling points.

Let the function $f \in L^2(\S^2)$ have bandwidth $L$; then we can express it
as
\begin{equation}
  \label{eq:SHExpansion}
  f(\theta, \phi) = \sum_{\ell = 0}^{L-1} \sum_{m = -\ell}^{\ell} \shf_{\ell}^m Y_{\ell}^m(\theta, \phi).
\end{equation}
Choose a set of sampling points $\set{\psi_n \in \S^2}_{n = 1}^{N}$, and let
$\mY = \left[y_{n, (\ell, m)}\right]$ where $y_{n, (\ell, m)} =
Y_{\ell}^m(\psi_n)$. Furthermore, let $\vf = [f(\psi_1), \ \ldots, \
f(\psi_n)]^\T$ be the vector of samples of $f$. We can then write
\begin{equation}
  \vf = \mY \wh{\vf},
\end{equation}
where $\wh{\vf}$ is the $L^2$-dimensional vector of spectral coefficients as
defined in \eqref{eq:vectorized_spectrum}. The goal is to recover the spectral
coefficients $\wh{\vf}$. We can recover $\wh{\vf}$ from $\vf$ as soon as the
matrix $\mY$ has full column rank. In that case, we compute
\begin{equation}
  \label{eq:pinv_spectrum_computation}
  \wh{\vf} = \mY^{\dagger} \vf,
\end{equation}
where $\mY^{\dagger}$ denotes the Moore-Penrose pseudoinverse of the matrix
$\mY$.

In particular, if we draw the samples uniformly at random on the sphere, we
can show that $\mY$ is regular with probability one:

\begin{prop}
  \label{prop:random_sampling}
  Draw $N$ sampling points from any absolutely continuous probability measure
  on the sphere (\emph{e.g.} uniformly at random). Then $\mY$ has full column
  rank almost surely if $N \geq L^2$, that is, if it has at least as many rows
  as columns.
\end{prop}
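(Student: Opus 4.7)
The plan is to combine a simple analyticity argument with an existing deterministic sampling theorem. First I would reduce to the square case $N=L^2$: any $N > L^2$ samples give $\mY$ of full column rank whenever some $L^2 \times L^2$ submatrix does, so it suffices to prove that $\det \mY \neq 0$ almost surely when $\mY$ is exactly square.

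The key observation is that, thanks to the explicit formulas \eqref{eq:sph}--\eqref{eq:associated_legendre}, every entry $Y_\ell^m(\theta_n,\phi_n)$ is a polynomial in $\cos\theta_n$, $\sin\theta_n$, $\cos\phi_n$, $\sin\phi_n$. Consequently $\det \mY$ is a real-analytic function of the joint parameter $(\psi_1,\dots,\psi_{L^2}) \in (\S^2)^{L^2}$, which is a connected real-analytic manifold. A standard fact then guarantees that a real-analytic function on such a manifold that is not identically zero vanishes only on a set of Lebesgue measure zero. Since the sampling points are drawn independently from an absolutely continuous measure on $\S^2$, their joint law is absolutely continuous with respect to the product surface measure on $(\S^2)^{L^2}$, so any Lebesgue-null set is also null under the sampling law.

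The remaining and only content-bearing step is to exhibit a single configuration of $L^2$ points on $\S^2$ at which $\det \mY \neq 0$; this is where I expect the main work to lie. I would invoke an existing exact sampling scheme for bandlimited functions on the sphere, such as Khalid--Kennedy--McEwen \cite{Khalid:2014is}, which recovers every signal bandlimited to $L$ from exactly $L^2$ spatial samples at prescribed locations. The invertibility of the corresponding analysis matrix is equivalent to perfect reconstruction for arbitrary bandlimited signals, so those $L^2$ points provide the required witness and $\det \mY$ is not the zero function. A self-contained alternative would be to build such a witness directly by induction on $L$---for instance by placing points on carefully chosen parallels of colatitude and using the $e^{\I m \phi}$ azimuthal structure to separate Fourier modes band by band---but invoking the cited sampling theorem is cleaner.
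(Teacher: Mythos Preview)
Your argument is correct, but it is not the paper's. The paper simply invokes Theorem~3.2 of Bass and Gr\"ochenig~\cite{Bass:2005dr}, whose proof---reproduced in this paper's Appendix for the closely related Lemma~\ref{lem:almost_surely_fullrank}---is an induction on the number of sampling points. One shows that the upper-left $M\times M$ minor is almost surely nonsingular; conditioned on that event, the set of choices for $\psi_{M+1}$ that make the $(M{+}1)\times(M{+}1)$ minor singular is the zero set of a nontrivial finite linear combination of spherical harmonics, hence has measure zero. The nontriviality at each step comes directly from the linear independence of $\set{Y_\ell^m}$ as functions, so no external sampling theorem is needed as a witness.

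Your route---real-analyticity of $\det\mY$ on the connected manifold $(\S^2)^{L^2}$ plus a single nondegenerate configuration---is equally valid and arguably cleaner conceptually, but it outsources the only substantive step (exhibiting a witness) to \cite{Khalid:2014is}. The inductive approach is more self-contained: it uses nothing beyond the fact that distinct spherical harmonics are linearly independent, and it yields the result column by column rather than all at once. Either proof is fine here; the trade-off is between a heavier analytic hammer with an imported witness versus an elementary induction that builds the rank from scratch.
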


The proof of this proposition is identical to that of Theorem 3.2 in
\cite{Bass:2005dr}, and is thus omitted.

The above result indicates that we can recover the spectral coefficients
$\shf_{\ell}^m$ in the lowpass region $\mathcal{I}_L$ from $L^2$ samples taken
at generic points on the sphere. The reconstruction requires a matrix
inversion as in \eqref{eq:pinv_spectrum_computation}.

Much faster reconstruction is possible when the function is sampled on certain
regular grids. In that case, we can leverage the structure of $\mY$ to
accelerate the matrix inversion. Such efficient schemes were proposed by
Driscoll and Healy \cite{Driscoll:1994bp}, requiring $4L^2$ samples; by McEwen
and Wiaux \cite{McEwen:2011ib}, requiring $2L^2$ samples; and most recently,
by Khalid, Kennedy and McEwen \cite{Khalid:2014is}, requiring $L^2$ samples.

\subsection{The Data Matrix} 
\label{sub:the_data_matrix}

Using the definition of associated Legendre polynomials in
\eqref{eq:associated_legendre}, we rewrite the spherical harmonics
\eqref{eq:sph} as

\begin{equation}
  \label{eq:rewrite}
  Y_{\ell}^m(\theta, \phi) = \wt{N}_{\ell}^m (\sin \theta)^{\abs{m}} \left[ \frac{d^{\abs{m}}}{d(\cos \theta)^{\abs{m}}} P_{\ell}(\cos \theta) \right] e^{\I m \phi},
\end{equation}
where $\wt{N}_{\ell}^m = (-1)^m N_{\ell}^m$.

The essential observation is that the bracketed term in \eqref{eq:rewrite} is
a polynomial in $x =
\cos \theta$. At bandwidth $L$, the largest spherical harmonic degree is
$L-1$, so the largest power of $x$ in \eqref{eq:rewrite} is $L-1$ as well. It
follows that we can rewrite the derivative term as a linear combination of
powers of $x$, \emph{i.e.}
\begin{equation}
  \label{eq:polyinprod}
  \wt{N}_\ell^m \frac{d^{\abs{m}}}{d (\cos\theta)^{\abs{m}}} P_\ell(\cos \theta) = \vc_{\ell m}^\T \vx,
\end{equation}
where $\vx \bydef [x^{L-1}, \ x^{L-2}, \ \cdots, \ x, \ 1]^\T$, $x = \cos
\theta$ and $\vc_{\ell m} \in \R^L$ contains the corresponding polynomial
coefficients.

Using the dot-product formulation \eqref{eq:polyinprod}, the spectrum of $f$,
as given by \eqref{eq:DiracSpectrum}, can be expressed as
\begin{equation}
  \label{eq:dot_product_prepared}
  \shf_{\ell}^m = \vc_{\ell m}^\T \sum_{k=1}^K \alpha_k \vx_{k} (\sin \theta_k)^{\abs{m}} e^{-\I m \phi_k},
\end{equation}
where $\vx_k \bydef [x^{L-1}_k, \ x^{L-2}_k, \ \cdots, \ x_k, \ 1]^\T$ with $x_k = \cos \theta_k$, and we factored
$\vc_{\ell m}^\T$ out of the summation as it does not depend on $k$. 

A key ingredient in our proposed algorithm is what we call the \emph{data
matrix} $\mDelta$, formed as a product of three matrices,
\begin{equation}
  \label{eq:data_matrix}
  \mDelta \bydef \mX \mA \mU,
\end{equation}
where
\begin{equation}
  \label{eq:matrix_X}
  \mX = [\vx_{1}, \ \cdots, \ \vx_{K}] \in R^{L \times K},
\end{equation}
is a Vandermonde matrix with roots $\cos \theta_k$, $\mA =
\diag(\alpha_1, \ldots, \alpha_K)$ is the diagonal matrix of Dirac magnitudes,
and we define
\begin{equation}
  \label{eq:matrix_U}
  \mU = [u_{km}] \in \R^{K \times (2L-1)},
\end{equation}
with $u_{km} \bydef (\sin \theta_k)^{\abs{m}} e^{-\I m\phi_k}$.

It is convenient to keep a non-standard indexing scheme for the rows and
columns of $\mDelta$, as illustrated in
Fig.~\ref{fig:data_matrix_illustration}B. Rows of $\mDelta$, indexed by $p$,
correspond to decreasing powers of $\cos \theta_k$, from $p = L-1$ at the top, to
$p = 0$ at the bottom; columns correspond to $u_{km}$, with $m$ increasing from
$-L+1$ on the left, to $L-1$ on the right. We see from
\eqref{eq:dot_product_prepared} and
\eqref{eq:data_matrix} that computing any spectral coefficient $\shf_\ell^m$
amounts to applying a linear functional on $\mDelta$ as follows
\begin{equation}\
  \label{eq:linfunc}
  \shf_{\ell}^m = \vc_{\ell m}^\T  \mDelta \ve_m = \inprod{\vc_{\ell m} \ve_m^\T,\ \mDelta}_F,
\end{equation}
where $\ve_m \in \R^{2L-1}$ is the vector with one in position $m$ for $-L < m
< L$, and zeros elsewhere, and $\inprod{\ \cdot \ , \ \cdot \ }_F$ denotes the
standard inner product between two matrices, defined as $\inprod{\mA,\ \mB}_F
= \sum_{ij} \overline{a_{ij}} b_{ij} = \trace(\mA^\H \mB)$.

\begin{figure}
\centering
\includegraphics[width=3.5in]{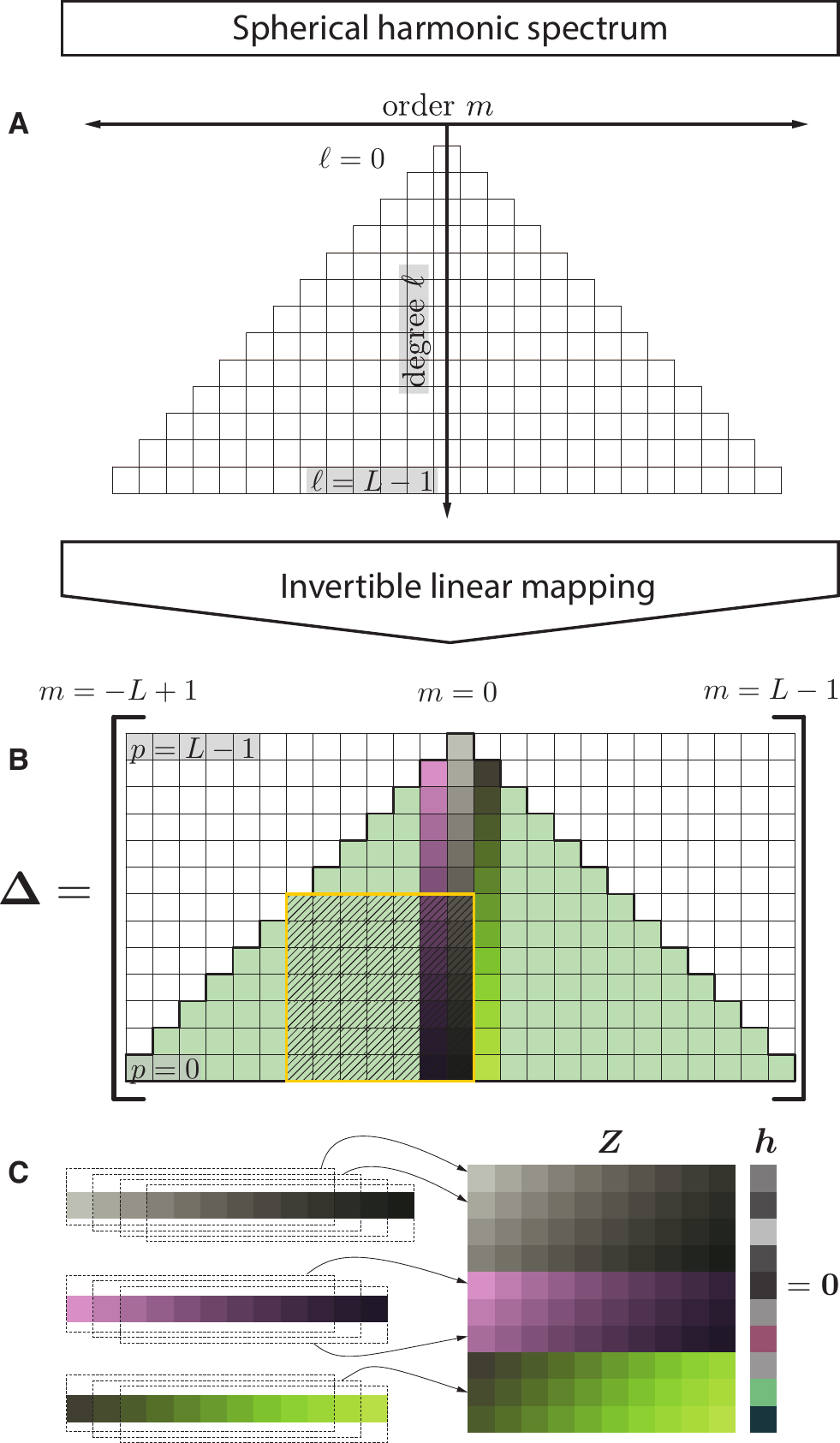}
\caption{Illustration of Algorithm \ref{alg:sfri}. Spherical harmonic spectrum
(A) is linearly mapped onto the shaded triangular part of the data matrix
$\mDelta$ (B). Columns of the data matrix are indexed from left to right by
$m$, $-(L-1) \leq m \leq (L-1)$, corresponding to spherical harmonic order.
Rows are indexed from bottom to top by $p$, $0 \leq p \leq (L-1)$
corresponding to powers of $\cos
 \theta$. Note that the triangular part of the data matrix \emph{does not}
coincide with the spherical harmonic spectrum, although there is a one-to-one
linear mapping between the two (see Lemma \ref{lem:fullrank}). Existing
results on 2D harmonic retrieval can exploit only a small part of the data
matrix, for example the hatched square (see Section
\ref{sub:relation_to_prior_work}. Finally, sufficiently long columns of
$\mDelta$ are rearranged in the block-Hankel-structured annihilation matrix
$\mZ$, whose nullspace contains exactly the sought annihilation filter, $\vh$
(C).}
\label{fig:data_matrix_illustration}
\end{figure}

The last expresion in \eqref{eq:linfunc} implies that the spectral coefficient
$\shf_{\ell}^m$ can be obtained as an inner product between the data matrix
$\mDelta$ and a mask $\vc_{\ell m} \ve_m^\T$ that is overlaid over $\mDelta$.
One can verify that the support of this mask for $\shf_\ell^m$ is on the
column corresponding to $m$, and on the rows corresponding to $0 \le p < L -
\abs{m}$. That means that certain parts of the data matrix are not involved in
the creation of any spectral coefficient; consequently, they cannot be
recovered from the spectrum. Nevertheless, we can recover a large part:

\begin{lem}
  \label{lem:fullrank}
  There is a one-to-one linear mapping between the spherical harmonic
  coefficients in the lowpass subband, $\big[ \shf_\ell^m, (\ell, m) \in
  \mathcal{I}_L \big]$, and the triangular part of the data matrix $\mDelta$
  indexed by $\mathcal{J}_L =
  \set{(p, m) \ | \ 0 \leq \abs{m} \leq p < L}$ (with indexing as illustrated
  in Fig.~\ref{fig:data_matrix_illustration}).
\end{lem}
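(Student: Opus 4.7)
The plan is to establish bijectivity of the linear map between $\{\shf_\ell^m\}_{(\ell,m)\in\mathcal{I}_L}$ and the entries of $\mDelta$ indexed by $\mathcal{J}_L$ by exploiting a column-wise decoupling. By \eqref{eq:linfunc}, the mask $\vc_{\ell m}\ve_m^\T$ that produces $\shf_\ell^m$ is supported entirely on column $m$ of $\mDelta$, so the slice of spectral coefficients of a fixed order $m$, namely $\{\shf_\ell^m : \abs{m}\leq\ell\leq L-1\}$, depends only on column $m$ of $\mDelta$. Cardinalities match: $\abs{\mathcal{I}_L}=L^2$ and $\abs{\mathcal{J}_L}=\sum_{m=-(L-1)}^{L-1}(L-\abs{m})=L^2$, with exactly $L-\abs{m}$ unknowns and $L-\abs{m}$ equations for every $m$. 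Thus it suffices to check bijectivity for each order $m$ separately.

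Fix $m$ and consider the square linear system mapping the column-$m$ entries of $\mDelta$ lying in $\mathcal{J}_L$ to $\{\shf_\ell^m\}_{\ell=\abs{m}}^{L-1}$. From \eqref{eq:polyinprod} and \eqref{eq:dot_product_prepared}, the $\ell$-th equation has the form
\begin{equation*}
\shf_\ell^m \;=\; \wt{N}_\ell^m \sum_{j} q_{\ell m, j}\,\mDelta_{j,m},
\end{equation*}
where $q_{\ell m,j}$ is the coefficient of $x^j$ in $q_{\ell m}(x) \bydef \frac{d^{\abs{m}}}{dx^{\abs{m}}} P_\ell(x)$. Because $q_{\ell m}$ has degree exactly $\ell-\abs{m}$, ordering the equations by increasing $\ell$ produces a triangular coefficient matrix $M_m$: each successive row activates exactly one new column-$m$ entry.

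The only real computation is to check that every diagonal entry of $M_m$ is nonzero. A diagonal entry equals $\wt{N}_\ell^m$ (nonzero by \eqref{eq:norm}) times the leading coefficient of $q_{\ell m}$, which in turn is the leading coefficient of $P_\ell$ multiplied by $\ell!/(\ell-\abs{m})!$. Since $P_\ell$ has exact degree $\ell$ and $\ell\geq\abs{m}$, this leading coefficient is strictly nonzero. Hence $M_m$ is invertible for every $m$, giving the desired bijection order by order and therefore the lemma. The main obstacle I anticipate is purely notational---reconciling the shape of $\mathcal{J}_L$ with the bottom-to-top $p$-indexing in Fig.~\ref{fig:data_matrix_illustration} so that the $L-\abs{m}$ entries of column $m$ activated by the masks $\vc_{\ell m}\ve_m^\T$ are exactly those declared to lie in $\mathcal{J}_L$; once that bookkeeping is pinned down, the triangular-system argument is essentially immediate.
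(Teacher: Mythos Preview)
Your proposal is correct and follows essentially the same route as the paper: both arguments decouple column by column (masks with different $m$ have disjoint support) and then, for fixed $m$, invoke the fact that the polynomials $\vc_{\ell m}^\T\vx$ have strictly increasing degree $\ell-\abs{m}$ as $\ell$ grows. The paper phrases this last step as linear independence of polynomials of different degrees, whereas you make the triangular structure and the nonvanishing diagonal entries explicit; the content is the same, and your concern about the $\mathcal{J}_L$ bookkeeping is indeed only notational.
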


\begin{proof}
  It is straightforward to verify that all the masks $\vc_{\ell m} \ve_m^\T$
  for $0 \leq \abs{m}
  \leq \ell < L$ are supported on the triangular part of $\mDelta$, as indexed
  by $\mathcal{J}_L$. Because the number of such masks coincides with the
  number of entries in the triangular part, and no mask is identically zero,
  it only remains to show that the masks are linearly independent. For $m_1
  \neq m_2$, this is true because their supports are disjoint ($\ve_{m_1}^\T$
  and $\ve_{m_2}^\T$ activate different columns),

  \begin{equation}
    \supp(\vc_{\ell_1 m_1} \ve_{m_1}^\T) \cap \supp(\vc_{\ell_2 m_2} \ve_{m_2}^\T) = \emptyset
  \end{equation}
  for any $\ell_1, \ell_2$. For $\ell_1 < \ell_2$ and $m_1 = m_2 = m$,
  $\vc_{\ell_1 m}^\T \vx$ and $\vc_{\ell_2 m}^\T \vx$ are polynomials of
  different degrees (\emph{c.f.} \eqref{eq:polyinprod}),
  \begin{equation}
    \deg(\vc_{\ell_1 m}^\T \vx) = (\ell_1 - \abs{m}) < (\ell_2 - \abs{m}) = \deg(\vc_{\ell_2 m}^\T \vx,
  \end{equation}
  where $\deg(\ \cdot \ )$ denotes the degree of the polynomial in the
  argument. Therefore, $\supp(\vc_{\ell_1 m} \ve_{m}^\T) \neq
  \supp(\vc_{\ell_2 m_2} \ve_{m}^\T)$, and in particular $\vc_{\ell_2 m}$ is
  linearly independent from all $\vc_{\ell m}$ such that $\ell < \ell_2$. This
  implies that all masks are linearly independent. Thus the mapping
  \begin{align}
    \label{eq:invertible_mapping}
     \mDelta 
     &\mapsto \big[ \inprod{\vc_{\ell m} \ve_m^\T,\ \mDelta}_F, \ 0 \leq \abs{m} \leq \ell < L \big] \nonumber \\
     &= \big[ \shf_\ell^m,  0 \leq \abs{m} \leq \ell < L \big]
  \end{align}
  is one-to-one on $\mathcal{J}_L$.
\end{proof}

\subsection{Reconstruction by Generalized Annihilating Filtering}

Element of the data matrix $\mDelta$ at the position $(p, m)$ (with reference
to Fig. \ref{fig:data_matrix_illustration}B) can be expanded as
\begin{equation}
  \label{eq:harmonic2d}
  d_{pm} = \sum_{k=1}^K \alpha_k x_k^{p} (\sin \theta_k)^{\abs{m}} e^{-\I m \phi_k},
\end{equation}
where $p$ varies from $0$ to $L-1$, and $m$ from $-(L-1)$ to $(L-1)$. For
either positive or negative $m$, the sum \eqref{eq:harmonic2d} is a sum of 2D
exponentials. Lemma~\ref{lem:fullrank} implies that we can recover the shaded
triangular part of the data matrix in Fig.~\ref{fig:data_matrix_illustration}
from the spectrum. In what follows, we propose a new algorithm to recover the
parameters of the Diracs from that triangular part.

The vector $\vd_m \bydef \mDelta \ve_m$ is a linear combination of columns of
$\mX$, \emph{i.e.}, it is a linear combination of $K$ exponentials with bases
$x_k$,
\begin{equation}
  \label{eq:d_m}
  d_{pm} = \sum_{k=1}^K (\alpha_k u_{km}) x_k^p,
\end{equation}
where $x_k = \cos(\theta_k)$. Similarly to standard Euclidean FRI sampling
\cite{Vetterli:2002bs}, we can use the
\emph{annihilating filter} technique to estimate the roots $\set{x_k = \cos
\theta_k}_{k=1}^K$ of these exponentials. 

Annihilating filter is a finite impulse response (FIR) filter with zeros
positioned so that it annihilates signals of the form
\eqref{eq:d_m}. Consider an FIR filter $H(z)$ with the transfer function
\begin{equation}
  \label{eq:annihilation_filter}
  H(z) \bydef \prod_{k=1}^K (1 - x_k z^{-1}) \bydef \sum_{n=0}^K h_n z^{-n},
\end{equation}
where $\vh = [h_0, \ h_{1}, \ \ldots, \ h_K]^\T$ is the vector of filter
coefficients. It holds that $\vh \conv \vd_m \equiv \vzero$ (see Appendix
\ref{appendix:annihilation_property}) for any $m$, provided that $\vd_m$ is of
length at least $K+1$. Equivalently, 
\begin{equation}
\label{eq:annihilation}
[d_{n, m}, \ d_{n-1,m}, \ \ldots, \ d_{n-K,m}] 
\begin{bmatrix}
h_0 \\ h_1 \\ \vdots \\ h_K
\end{bmatrix}
= 0,
\end{equation} 
for $n \geq K$. In our scenario, we do not know the bases of the exponentials
$\set{x_k}_{k=1}^K$---they are exactly the parameters we aim to estimate. Thus
we do not know the filter $H(z)$ either.
 
Up to a scaling factor, there is a unique $(K+1)$-tap filter $H(z)$ with the
sought property. The orthogonality relation \eqref{eq:annihilation} says that
$\vh$ lives in the nullspace of $[d_{n,m} \ d_{n-1,m} \ \cdots \ d_{n-K,m}]$;
we need at least $K$ such vectors to make their joint nullspace
one-dimensional, thus to pinpoint $\vh$. Once the filter coefficients are
found, we can obtain the unknown parameters $\set{x_k}$ by root finding and
using the factorization in
\eqref{eq:annihilation_filter}.

For the annihilating filter technique to be applicable, we need to ensure that
all the colatitude angles $\theta_k$ are distinct. Furthermore, the form of
our equations reveals that for $\theta_k \in \set{0, \pi}$, $u_{km} = 0$ for
all $m$. In the parameterization
\eqref{eq:data_matrix}, this is equivalent to setting $\alpha_k = 0$, and it
prevents us from recovering the corresponding Dirac. This behavior is
undesirable, but we can guarantee that no Dirac sits on a pole by first
applying a random rotation. This fact is formalized in the following lemma, which follows immediately from the absolute continuity of the Haar measure.

% Remember: Convention is 0^0 = 1 for \theta = \pi/2.

\begin{lem}
\label{lem:almost_surely_distinct}
Consider a collection of Dirac delta functions on the sphere, $f(\xi) =
\sum_{k=1}^K \alpha_k \delta(\xi; \xi_k)$, and a random rotation $\varrho$
drawn from the Haar measure on $\SO_3$ (\emph{i.e.} uniformly over the
elements of the group). Then with probability 1, $\Lambda(\varrho) f$ contains
Diracs with distinct colatitude angles, $\theta_i \neq \theta_j$ for $i \neq
j$, and no Dirac is on the pole, $\theta_k \notin \set{0, \pi}$ for all $k$.
\end{lem}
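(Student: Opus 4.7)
The plan is to exhibit the ``bad set''---those $\varrho \in \SO_3$ violating either of the two conclusions---as a finite union of proper real-analytic subvarieties of the three-dimensional Lie group $\SO_3$, and then invoke the fact that the Haar measure on $\SO_3$ is (up to normalization) the Riemannian volume measure, and is therefore absolutely continuous with respect to Lebesgue measure in any smooth coordinate chart. Any proper real-analytic subvariety is Lebesgue-null, and a finite union of null sets is null.

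Concretely, I would decompose the bad event as
\begin{equation*}
B \;=\; \bigcup_{k=1}^K B_k^{\text{pole}} \;\cup\; \bigcup_{1 \leq i < j \leq K} B_{ij}^{\text{eq}},
\end{equation*}
where $B_k^{\text{pole}} = \set{\varrho \in \SO_3 : (\varrho \circ \xi_k)_z^{\,2} = 1}$ is the set of rotations sending $\xi_k$ to a pole, and $B_{ij}^{\text{eq}} = \set{\varrho \in \SO_3 : (\varrho \circ \xi_i)_z = (\varrho \circ \xi_j)_z}$ records the coincidences of colatitude, with $(\,\cdot\,)_z$ denoting the third Euclidean coordinate (equivalently $\cos \theta$). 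Each set is the zero locus of a polynomial in the matrix entries of $\varrho$, and so is a closed real-analytic subvariety of $\SO_3$.

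The one point that actually needs checking---and in my view the only nontrivial part of the argument---is that each defining polynomial is not identically zero, so that its zero set is a \emph{proper} subvariety and therefore Lebesgue-null. For $B_k^{\text{pole}}$, the transitivity of the $\SO_3$-action on $\S^2$ immediately produces a rotation sending $\xi_k$ to any non-polar point. For $B_{ij}^{\text{eq}}$, using the assumption (implicit in the Dirac model \eqref{eq:model}) that the $K$ locations are pairwise distinct, one chooses a rotation $\varrho^\star$ that maps $\xi_i$ to the north pole; then $(\varrho^\star \circ \xi_i)_z = 1$ while $(\varrho^\star \circ \xi_j)_z < 1$, because $\xi_j \neq \xi_i$ is not sent to the same point. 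Combining over the finitely many indices, $B$ is a finite union of Lebesgue-null, hence Haar-null, sets, so $\varrho \notin B$ with probability one, which is exactly the content of the lemma.
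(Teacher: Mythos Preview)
Your argument is correct and is precisely the standard way to make the paper's one-line justification rigorous: the paper simply states that the lemma ``follows immediately from the absolute continuity of the Haar measure'' and gives no further details, so your decomposition of the bad set into finitely many proper analytic subvarieties and your verification that each defining polynomial is not identically zero on $\SO_3$ are exactly the missing steps. There is no alternative approach in the paper to compare against.
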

We are now well-equipped to prove the main result.

\begin{proof}[Proof of Theorem \ref{thm:sfri}]
  We provide a constructive proof, summarized in Algorithm \ref{alg:sfri}.
  First observe that $L^2$ random samples almost surely suffice to compute the
  spectral coefficients $\shf_{\ell}^m$ in the lowpass subband $\mathcal{I}_L$
  with bandwidth $L$, as detailed in Section
  \ref{sub:from_samples_to_spectrum} (see Proposition
  \ref{prop:random_sampling}). By Lemma \ref{lem:fullrank}, we can then
  compute the shaded part of $\mDelta$ given the spectrum $\wh{\vf}$.
  
  Our aim is to construct the annihilating matrix $\mZ$, structured as follows
  \begin{equation}
    \label{eq:annihilating_matrix}
    \mZ = 
    \begin{bmatrix}
      d_{L-1,0} & d_{L-2,0} & \cdots & d_{L-K-1, 0} \\
      d_{L-2,0} & d_{L-3,0} & \cdots & d_{L-K-2,0} \\
      \vdots & \vdots &  & \vdots \\
      d_{K, 0} & d_{K-1,0} & \cdots & d_{0,0} \\
      d_{L-2,1} & d_{L-3,1} & \cdots & d_{L-K-2,1} \\
      d_{L-3,1} & d_{L-4,1} & \cdots & d_{L-K-3,1} \\
      \vdots & \vdots & & \vdots
    \end{bmatrix}.
  \end{equation}
  $\mZ$ is constructed by stacking segments of length $(K+1)$ extracted from
  the columns of $\mDelta$. From the annihilation property
  \eqref{eq:annihilation}, it follows that the nullspace of $\mZ$ contains the
  sought annihilating filter.

  The trick now is to count how many such segments we can get from the shaded
  part of $\mDelta$. For $m=0$, $p$ varies from $0$ to $L-1$. Therefore, we
  can construct $L-K$ rows of the matrix $\mZ$. For $m=1$, $p$ varies from $0$
  to $L-2$, so we can construct $L-K-1$ rows of $\mZ$, and the same goes for
  $m=-1$. This process is illustrated in
  Figs.~\ref{fig:data_matrix_illustration}B and
  \ref{fig:data_matrix_illustration}C. Summing up, we get the total number of
  rows of $\mZ$ that we can construct from the available spectrum,
  \begin{equation}
  \begin{aligned}
    \# &= (L-K) + 2\times (L-K-1) + \cdots + 2 \times 1 \\
    &= (L-K)^2.
  \end{aligned}
  \end{equation}
  $\mZ$ needs at least $K$ rows, as we need a 1D nullspace. Thus
  \begin{equation}
  \begin{aligned}
    & \phantom{\Rightarrow} (L-K)^2 \geq K  \\
    &\Rightarrow L \geq K + \sqrt{K}.
  \end{aligned}
  \end{equation}
  In Appendix \ref{appendix:rank_of_Z} we show that $\mZ$ has rank $K$ as soon
  as it has $K$ or more rows. In other words, it has a one-dimensional
  nullspace, and thus the annihilating filter coefficients are uniquely
  determined, up to a scaling factor.

  We find the parameters $[\theta_k]_{k=1}^K$ by taking the arc cosine of the
  roots of $H(z)$. This procedure is well-posed because arc cosine is one-to-one on $[0,
  \pi]$. To ensure that the roots are distinct, we apply a random rotation
  before the estimation, and the inverse of this random rotation after
  recovering all the parameters of the Diracs (invoking
  Lemma~\ref{lem:almost_surely_distinct}).

  In order to recover the azimuths $\set{\phi_k}_{k=1}^K$, note that after recovering
  the colatitudes, we can construct the matrix $\mX$, and compute $\mA \mU
  \ve_m$ for $\abs{m} \leq L - K$. The azimuths are then given
  as the phase difference between $\mA \mU \ve_0$ and $\mA \mU \ve_1$. The
  magnitudes $\alpha_k$ are obtained simply as $\mA \mU \ve_0$.
\end{proof}

\begin{algorithm}[t]
\caption{Spherical Sparse Sampling}
\label{alg:sfri}
\begin{algorithmic}[1]
\Require{Spatial samples of $f \in L^2(\S^2)$ with bandwidth $L$, number of
Diracs $K$}
\Ensure{Colatitudes, azimuths and magnitudes $\set{(\alpha_k, \theta_k,
\phi_k)}_{k=1}^K$ of the $K$ Diracs}
\vspace{1mm}
\State Sample a random rotation $\varrho \sim \text{Haar}(\mathbb{SO}_3)$ 
\State Apply $\varrho$ to $f$, $f \gets \Lambda(\varrho) f$ (relabel sampling points)
\State Compute the spectrum $\shf$ from the rotated samples of $f$
\State Form $\mDelta$ from $\shf$ using the inverse mapping of \eqref{eq:invertible_mapping}
\State Form $\mZ$ from $\mDelta$ according to \eqref{eq:annihilating_matrix}
\State $\vh \gets$ Right singular vector of $\mZ$ for smallest sing. val.
\State Compute the colatitudes, $(\theta_k)_{k=1}^K \gets \arccos[\text{Roots}(\vh)]$
\State Construct $\mX$ from $x_k = \cos \theta_k$ according to \eqref{eq:matrix_X}
\State Using $\mX$ in \eqref{eq:data_matrix}, compute $\mA \mU \ve_0$ and $\mA
\mU \ve_1$
\State $(\phi_k)_{k=1}^K \gets \text{Angle} \big[(\mA \mU \ve_0) \oslash (\mA
\mU \ve_1) \big]$  \Comment{See the note}
\State $(\alpha_k)_{k=1}^K \gets \mA \mU \ve_0$
\State Apply the inverse of $\varrho$, $\xi_k = (\theta_k, \phi_k) \gets \varrho^{-1} \circ \xi_k$, $\forall k$
\end{algorithmic}
\hrule
\vspace{1mm}
$\rhd$ Note: we use the symbol $\oslash$ to denote element-wise division \\
\phantom{$\rhd$} of vectors.
\end{algorithm}

\subsection{Sampling Efficiency and Relation to Prior Work}

\label{sub:relation_to_prior_work}

Our proposed sampling scheme and the spherical FRI sampling theorem by
Deslauriers-Gauthier and Marziliano \cite{DeslauriersGauthier:2013ij} are both
naturally expressed in terms of the bandwidth $L$ of the sampling kernel
required to recover $K$ Diracs. In our case, the bandwidth requirement is that
it be at least $K + \sqrt{K}$. This implies that we need at least $(K +
\sqrt{K})^2$ spatial samples in order to recover the $K$ Diracs. For
comparison, the FRI sampling theorem of Deslauriers-Gauthier and Marziliano
\cite{DeslauriersGauthier:2013ij} requires $L \ge 2K$, and thus their
algorithm recovers $K$ Diracs given $4K^2$ samples. This is asymptotically
four times the number of samples required by Algorithm
\ref{alg:sfri}. 

The difference in sampling efficiency can be explained by spectrum usage.
Fig.~\ref{fig:spectrum_efficiency} illustrates the portion of the spectrum
used by the two algorithms. We can see that the proposed algorithm is more
efficient in that it uses a larger portion of the available spectrum to
reconstruct the Diracs.

Similar problems have been considered in the literature on 2D harmonic
retrieval \cite{Vanpoucke:1994bq}. However, these earlier works assume that
the entire data matrix is known. In our case,
$\mDelta$ is known only partially, as illustrated in Fig.
\ref{fig:data_matrix_illustration}B.
To apply the existing results on 2D harmonic retrieval, we could use a square
portion that falls strictly inside a half of the triangle, either for $m
\ge 0$ or for $m \le 0$. However, we can see in
Fig.~\ref{fig:data_matrix_illustration}B that this is an inefficient use of
available spectrum, and it requires an unnecessarily high sampling
density.

As mentioned earlier, in most situations we do not get to choose $L$ as
it is fixed by the underlying physical process. Then the question is how many
Diracs we can reconstruct given a kernel with a fixed bandwidth $L$. By
solving $L \geq K + \sqrt{K}$ for $K$, we get that
\begin{align}
  K \leq  L - (L + \tfrac{1}{4})^{1/2} + \tfrac{1}{2}.
\end{align}
In contrast, the algorithm in \cite{DeslauriersGauthier:2013ij} can
reconstruct up to $K=L/2$ Diracs.

\begin{figure}
  \centering
  \begin{overpic}[width=3.5in]{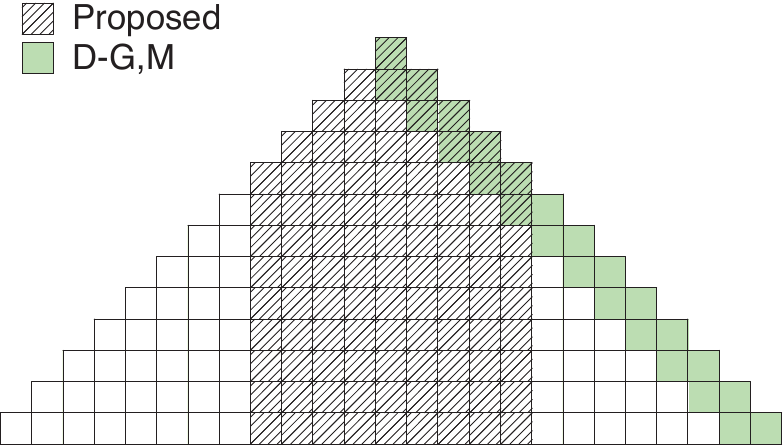}
  \put (23.4, 48.20) {\scalebox{1.15}{\textsf{\cite{DeslauriersGauthier:2013ij}}}}
  \end{overpic}
  \caption{Spectrum usage for different algorithms. Spectral coefficients used
  by our algorithms are shown hatched. Spectrum used by the algorithm of
  Deslauriers-Gauthier and Marziliano \cite{DeslauriersGauthier:2013ij} is
  shaded green. In the example, the bandwidth is set to $L=12$, so the maximum
  number of Diracs that can be recovered by Algorithm \ref{alg:sfri} is $K=9$.
  The algorithm in \cite{DeslauriersGauthier:2013ij} recovers $K=6$ Diracs.}
  \label{fig:spectrum_efficiency}
\end{figure}

\subsection{Denoising Strategies} % (fold)
\label{sub:denoising_str}

Theorem \ref{thm:sfri} and Algorithm \ref{alg:sfri} provide a tool to recover
sparse signals on the sphere in the noiseless case. We may apply several
procedures to improve the robustness of the algorithm in the presence of
noise.

In general, if the samples are noisy then the annihilating matrix $\mZ$ in
\eqref{eq:annihilating_matrix} will not have a nontrivial nullspace. A simple
and robust approach is to use the right singular vector corresponding to the
smallest singular value of $\mZ$ as the annihilation filter. Let $\mZ = \mU
\mat{\Sigma} \mV^H$ be the SVD of $\mZ$; then we set $\vh =
\vv_{K+1}$.

To further improve the algorithm performance, we can use the output of
Algorithm \ref{alg:sfri} to initialize a local search for the minimizer of the
$\ell^2$ error between the spectrum generated by the estimated Diracs, and the
measured spectrum,
\begin{equation}
  \label{eq:descent}
  \mathop{\text{minimize}}_{(\wt{\alpha}_k, \wt{\theta}_k, \wt{\phi}_k)_{k=1}^K}~\sum_{\ell=0}^{L-1} \sum_{m=-\ell}^\ell \abs{\shf_\ell^m - \sum_{k=1}^K \wt{\alpha}_k Y_\ell^m(\wt{\theta}_k, \wt{\phi}_k)}^2.
\end{equation}
We note that directly solving \eqref{eq:descent} with a random starting point
is hopeless due to a multitude of local minima.

\subsection{Cram\'er-Rao Lower Bound} % (fold)
\label{sub:crlb}

We evaluate the Cram\'er-Rao lower bound (CRLB) for the estimation problem.
For simplicity we treat the $K=1$ case, so that the minimal bandwidth is $L =
2$, and $\ell \in \set{0, 1}$. We assume that the spatial samples are taken on
the sampling grid defined by McEwen--Wiaux \cite{McEwen:2011ib}, given at this
bandwidth as
\begin{equation}
  \begin{bmatrix}
    \bm{\theta} \\
    \bm{\phi}
  \end{bmatrix}
  = 
  \begin{bmatrix}
    \pi/3 & \pi/3 & \pi/3  & \pi    & \pi    & \pi \\
    0     & 0     & 2\pi/3 & 2\pi/3 & 4\pi/3 & 4\pi/3
  \end{bmatrix}.
\end{equation}
Resulting expressions for elements of the Fisher information matrix are
complicated, and there is no need to exhibit them explicitly. We give the
details of the computation in Appendix \ref{appendix:crlb_computation}, and we
compute the CRLB numerically. The resulting bound is plotted in
Fig.~\ref{fig:crlb} for two different spike colatitudes, together with the MSE
achieved by Algorithm \ref{alg:sfri} followed by the descent
\eqref{eq:descent}. As pointed out before, because our scheme is
coordinate-system-dependent, the bound depends on the colatitude of the Dirac.

\begin{figure}
\centering
\includegraphics{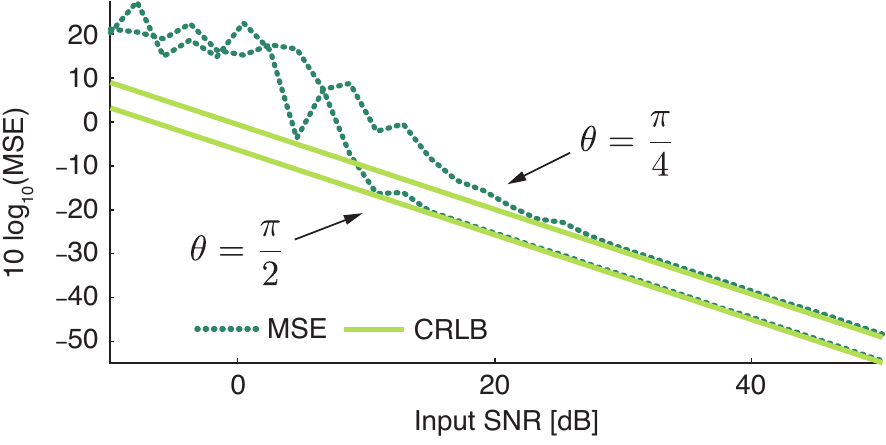}
\caption{Comparison between the mean squared error (MSE) of the proposed
algorithm in estimating the spherical location $(\theta, \phi)$, with $K=1$,
and the Cram\'er-Rao lower bound (CRLB), at two different colatitudes. Note
that the bound is different for different colatitudes of the spike, due to
parameterization dependence. MSE is shown for the output of Algorithm
\ref{alg:sfri} followed by the minimization of \eqref{eq:descent} using
Matlab's \texttt{fminsearch} function.}
\label{fig:crlb}
\end{figure}

\section{Applications}

\label{sec:applications}

To showcase the versatility of the proposed algorithm, we present three
stylized applications: 1) sampling diffusion processes on the sphere, 2)

shot noise removal, and 3) sound source localization with spherical
microphone arrays.

\subsection{Sampling Diffusion Processes on the Sphere}

\begin{figure}[t!]
  \centering
  \includegraphics{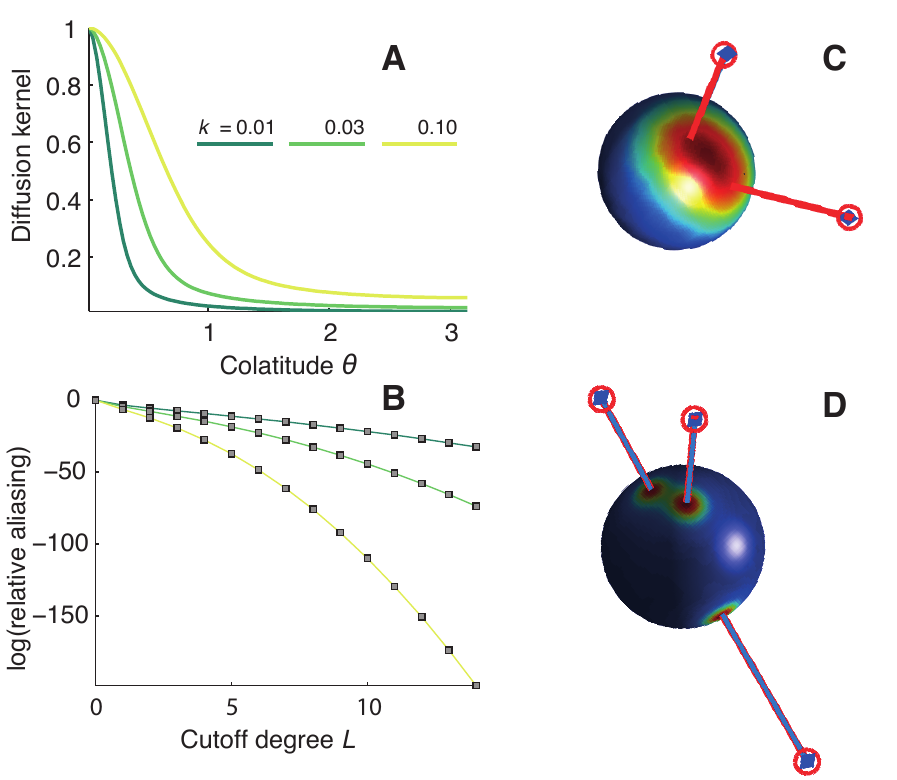}
  \caption{Estimating the release locations and magnitudes of diffusive
  sources on the sphere. We assume that the diffusive sources appear at time
  $t=0$s, and that we sample the field at time $t = 1$s. Shape of the
  diffusion kernel as a function of $\theta$ is shown in subfigure A for three
  different values of the coefficient $k$ (in units of inverse time). The
  logarithm of the aliasing error \eqref{eq:aliasing} is plotted as a function
  of the cutoff degree $L$ in subfigure B. Subfigures C and D show a typical
  reconstruction result for $k=0.1$ (2 sources) and $k=0.01$ (3 sources).
  Magnitudes of the sources are represented by the distance of the
  corresponding symbols from the sphere's center. Blue diamonds represent true
  source locations and magnitudes, while red circles represent estimated
  source locations and magnitudes. The sphere color corresponds to the value
  of the function induced on the sphere by the sources (red is large, blue is
  small). Signal-to-noise ratio in both C and D was set to 30 dB. We used the
  approximate bandwidth of $L=7$, so that the number of samples taken in
  either case was 49.}
  \label{fig:Diffusion}
\end{figure}

The diffusion process models many natural phenomena. Examples include heat
diffusion and plume spreading from a smokestack. Often, the source of the
diffusion process is localized in space, and instantaneous in time. Sampling
such processes in Euclidean domains has been well studied
\cite{Dokmanic:2011iy, Lu:2009go, Lu:2011id}.

Diffusion processes on the sphere are governed by the equation
\cite{Bulow:2004jv}

\begin{equation}
  k \Delta v(\xi, t) = \parder{}{t} v(\xi, t),
\end{equation}
where $\Delta$ is the Laplace-Beltrami operator on $\S^2$, and $k$ is the
diffusion constant. In the spherical harmonic domain, this becomes
\begin{equation}
 -k \ell (\ell+1) \shv_{\ell}^m(t) = \parder{}{t} \shv_{\ell}^m(t),
\end{equation}
giving the solution
\begin{equation}
  \label{eq:diffusion_initial}
  \shv_{\ell}^m(t) = e^{-\ell(\ell+1) k t} \shv_{\ell}^m(0),
\end{equation}
where $\shv_{\ell}^m(0)$ is the spectrum of the initial distribution.
Therefore, we interpret the term $e^{-\ell(\ell+1)kt} \delta_{m, 0}$ as the
spectrum of the Green's function of the spherical diffusion equation. In other
words, it is the spectrum of the diffusion kernel on the sphere. Then
\eqref{eq:diffusion_initial} should be interpreted as the convolution between
the kernel and the initial distribution.

We consider the case when the diffusion process is initiated by $K$ sources
localized in space and time, \emph{i.e.}, the initial distribution in
\eqref{eq:diffusion_initial} is
\begin{equation}
  v(\xi; 0) = \sum_{k=1}^K \alpha_k \delta(\xi; \xi_k).
\end{equation}
We show how to use the proposed sampling algorithm to estimate the locations
and the strengths of the sources from spatial samples of the diffusion field
taken at a later time $t_0$. 
Recovering all parameters (locations, amplitudes and release times) of
multiple diffusion sources is a challenging task \cite{Dokmanic:2011iy}. To
focus on the proposed sampling result, we make the simplifying assumption that
the $K$ sources are released simultaneously, and at a known time ($t=0$). In
principle, the more challenging case of unknown and different release times
can be handled by adapting the techniques derived in \cite{Lu:2009go,
Lu:2011id}, but these generalizations are out of the scope of this work.

In the spatial domain, the diffusion kernel at time
$t_0$ after the release is given as
\begin{equation}
  \label{eq:diffusion_kernel_spatial}
  h_{\mathrm{dif}}(\xi; t_0) = \sum_{\ell=0}^{\infty} e^{-\ell(\ell+1)k t_0} Y_\ell^0(\xi).
\end{equation}
Combining \eqref{eq:diffusion_kernel_spatial} with \eqref{eq:diffusion_initial} and the spherical convolution-multiplication rule \eqref{eq:ConvSpectrum}, we get
\begin{equation}
\begin{aligned}
  v(\xi; t_0) &= v(\xi; 0) \conv h_{\mathrm{dif}}(\xi; t_0) \\
  &= \sum_{k=1}^K c_k [\Lambda(\varrho_k) h_{\mathrm{dif}}(\ \cdot \, ; t_0)](\xi).
\end{aligned}
\end{equation}
This signal is a sum of rotations of a known template. The diffusion kernel in
\eqref{eq:diffusion_kernel_spatial} is not exactly bandlimited, but it is
approximately so. We can therefore apply the spherical FRI theory and
Algorithm
\ref{alg:sfri} to recover the locations and the magnitudes of the diffusive
sources.

Fig.~\ref{fig:Diffusion}A shows the shape of the symmetric diffusion kernel as
a function of the colatitude $\theta$. The high degree of smoothness is
reflected in an approximately bandlimited spectrum. This is demonstrated in
Fig.~\ref{fig:Diffusion}B, where we see that the aliasing energy due to
spectral truncation, defined as

\begin{equation}
  \varepsilon(L) = \frac{1}{\norm{v}_2^2}\sum_{\ell=L}^\infty \frac{\abs{\shv_\ell^m}^2}{2\ell + 1},
  \label{eq:aliasing}
\end{equation}
rapidly becomes negligible as we increase the cutoff bandwidth $L$.
Figs.~\ref{fig:Diffusion}C and
\ref{fig:Diffusion}D demonstrate accurate reconstruction of the localized
diffusion sources at two different values of the diffusion coefficient (the
detailed parameters of the numerical experiment are given in the figure
caption).

\subsection{Shot Noise Removal}

Suppose that we sample a bandlimited function on the sphere, but a small
number of samples are corrupted---they contain \emph{shot noise}---due to
sensor malfunction. Moreover, the identities of the malfunctioning sensors are
not known \emph{a priori}. Can we detect and correct these anomalous
measurements? We show that our sampling results can be applied to solve this
problem, provided that the number of erroneous sensors is not too large and
that the original sampling grid is \emph{oversampling} the bandlimited
function. A similar idea was used in \cite{Marziliano:2006do} to remove shot
noise in the 1D Euclidean case.

For this application we assume that the samples are taken on a uniform grid on
the sphere, $\set{(\theta_p, \phi_q) \ | \ p, q \in \Z, 0 \leq p < 2L', 0 \leq q < 2L'}$,
defined by
\begin{equation}
  \label{eq:dh_grid}
  \theta_p = \dfrac{p \pi}{2 L'} \quad , \quad \phi_q = \dfrac{q \pi}{L'}.
\end{equation}

Imagine now that we sample $f$ on this sampling grid. Some samples are
corrupted, so we measure $g(\theta_p, \phi_q) = f(\theta_p,
\phi_q) + s_{pq}$, where
\begin{equation}
  s_{pq} = 
  \begin{cases}
    \text{nonzero} & (p,q) \in \mathcal{S} \\
    \text{zero} & \text{otherwise},
  \end{cases}
\end{equation}
and $\mathcal{S}$ holds the indices of the corrupted samples.  We will
leverage an elegant quadrature rule by Driscoll and Healy
\cite{Driscoll:1994bp}:
\begin{thm}{\cite[Theorem 3]{Driscoll:1994bp}}
  Let $f$ be a bandlimited function on $\S^2$ such that $\shf_\ell^m =
  0$ for $\ell \geq L'$. Then for $(\ell, m) \in \mathcal{I}_{L'}$ we have
  \begin{equation}
    \shf_\ell^m = \sum_{p=0}^{2L'-1} \sum_{q=0}^{2L'-1} a_p^{(L')} f(\theta_p, \phi_k) \overline{Y_\ell^m(\theta_p, \phi_q)},
  \end{equation}
  where the weights $a_p^{(L')}$ are defined in \cite{Driscoll:1994bp}.
\end{thm}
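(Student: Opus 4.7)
My plan is to reduce the integral defining $\shf_\ell^m$ to a finite weighted sum by exploiting the bandlimitedness of $f$ separately in the azimuthal and colatitudinal directions, and to choose the weights $a_p^{(L')}$ so that the colatitudinal part is handled exactly despite the use of an equispaced (non-Gaussian) grid in $\theta$. I would begin from
\begin{equation*}
\shf_\ell^m = \int_0^{\pi}\!\!\int_0^{2\pi} f(\theta,\phi)\,\overline{Y_\ell^m(\theta,\phi)}\,\sin\theta\,d\phi\,d\theta
\end{equation*}
and expand $f$ in the spherical harmonic basis. Every resulting integrand is a linear combination of terms of the form $P_{\ell_1}^{|m_1|}(\cos\theta)\,P_\ell^{|m|}(\cos\theta)\,\sin\theta\cdot e^{\I(m_1 - m)\phi}$ with $\ell_1 < L'$. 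Using the identity $P_{\ell'}^{|m'|}(\cos\theta) = (\sin\theta)^{|m'|}\,Q_{\ell',m'}(\cos\theta)$ with $Q_{\ell',m'}$ polynomial of degree $\ell' - |m'|$, the azimuthal factor is a trigonometric polynomial in $\phi$ of degree at most $2L'-2$, and the colatitudinal factor, including the measure $\sin\theta$, is a trigonometric polynomial in $\theta$ of degree at most $\ell_1 + \ell + 1 \le 2L'-1$.

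The azimuthal integration is then immediate. On the equispaced nodes $\phi_q = q\pi/L'$ the composite trapezoidal rule
\begin{equation*}
\frac{\pi}{L'}\sum_{q=0}^{2L'-1} e^{\I k \phi_q} = \int_0^{2\pi} e^{\I k\phi}\,d\phi
\end{equation*}
is exact for every integer $|k| < 2L'$, because $\sum_q e^{\I k q\pi/L'}$ equals $2L'$ when $k$ is a multiple of $2L'$ and vanishes otherwise. The frequencies $|m_1 - m| \le 2L'-2$ arising above lie well inside this window, so the inner $\phi$-integral can be replaced exactly by the uniform sum over $\phi_q$.

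The essential remaining step is to construct weights $a_p^{(L')}$ so that the colatitudinal quadrature
\begin{equation*}
\sum_{p=0}^{2L'-1} a_p^{(L')}\,g(\theta_p) = \int_0^\pi g(\theta)\,d\theta
\end{equation*}
is exact for every trigonometric polynomial $g$ of degree up to $2L'-1$ on the equispaced grid $\theta_p = p\pi/(2L')$. Unlike in Gauss--Legendre quadrature, the nodes are forced to be equispaced in $\theta$ rather than in $\cos\theta$, so no direct optimality argument applies. I would instead expand $g$ in a sine basis adapted to $[0,\pi]$, say $\{\sin((2k+1)\theta/2) : 0 \le k < 2L'\}$, match the required moments $\sum_p a_p^{(L')}\,\sin((2k+1)\theta_p/2) = 2/(2k+1)$ for each such $k$, and solve the resulting DFT-structured linear system in closed form using the orthogonality of complex exponentials on the equispaced $\theta$-grid. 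The principal obstacle is this explicit construction and the verification that the closed-form weights integrate the entire spanning set correctly; once this is established, combining it with the azimuthal sum yields the claimed quadrature identity.
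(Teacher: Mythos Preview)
The paper does not prove this statement; it is quoted verbatim from Driscoll and Healy and used as a black box in the shot-noise application, so there is no proof in the paper to compare against.

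Your overall strategy is sound and is essentially the original Driscoll--Healy argument: exact trapezoidal quadrature in $\phi$ on the equispaced azimuthal nodes (valid because $|m_1-m|\le 2L'-2<2L'$), followed by a designed weighted quadrature in $\theta$. One simplification worth noting: once the $\phi$-sum enforces $m_1=m$, the remaining $\theta$-integrand $P_{\ell_1}^{|m|}(\cos\theta)\,P_\ell^{|m|}(\cos\theta)\,\sin\theta\,d\theta$ becomes, under $x=\cos\theta$, the \emph{polynomial} $P_{\ell_1}^{|m|}(x)\,P_\ell^{|m|}(x)$ in $x$ of degree at most $\ell_1+\ell\le 2L'-2$ (the factors $(1-x^2)^{|m|/2}$ pair up). The natural requirement on the weights is therefore $\sum_p a_p^{(L')} q(\cos\theta_p)=\int_{-1}^1 q(x)\,dx$ for all polynomials $q$ of degree $\le 2L'-2$, or equivalently $\sum_p a_p^{(L')}\sin(k\theta_p)=\int_0^\pi\sin(k\theta)\,d\theta$ for $1\le k\le 2L'-1$.

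Your specific choice of the half-integer sine system $\{\sin((2k+1)\theta/2):0\le k<2L'\}$ is the one place where the plan does not go through. These functions are not $2\pi$-periodic in $\theta$ and do not span the odd trigonometric polynomials of degree $\le 2L'-1$; for instance $\sin\theta$ has an infinite expansion in this system, so matching finitely many of its moments does not guarantee exactness. Moreover, every function in your system vanishes at $\theta_0=0$, so the $2L'\times 2L'$ moment matrix you propose is automatically rank-deficient. If you replace this by the integer-frequency system $\{\sin(k\theta):1\le k\le 2L'-1\}$, or equivalently impose exactness on $\{P_k(\cos\theta):0\le k\le 2L'-2\}$, the DFT structure you allude to is genuine and the closed-form weights $a_p^{(L')}$ follow, completing the argument.
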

In other words, the Fourier coefficients $\shf_\ell^m$ can be expressed as a
dot-product between weighted sample values and the basis functions evaluated
at the sampling points. In analogy with the Euclidean case, we now observe
that the lowpass portion of the spectrum of $f$ coincides with the lowpass
portion of the spectrum of the generalized function obtained by placing
weighted Diracs at grid points. Let $f$ be bandlimited so that $\shf_\ell^m =
0$ for $\ell
\geq L$. Let further $L < L'$; that is, the grid \eqref{eq:dh_grid}
oversamples $f$. Then the spectral coefficients can be expressed as the
following inner product,
\begin{equation}
  \shf_\ell^m = \inprod{\sum_{p, q=0}^{2L'-1} a_p^{(L')} f(\theta_p, \phi_q) \delta_{\theta_j, \phi_q},\ Y_\ell^m},
\end{equation}
for $\ell < L$, $\abs{m} \leq \ell$.

This is the key insight. Notice that the
lowpass portion of the spectrum of $g$ (for $\ell < L'$) can be written as
\begin{equation}
  \shg_\ell^m = \shf_\ell^m + \inprod{\sum_{(p, q) \in \mathcal{S}} a^{(L')}_p s_{pq} \delta_{\theta_p, \phi_q}, Y_\ell^m}.
\end{equation}
But $\shf_\ell^m = 0$ for $\ell \geq L$, so the portion of the spectrum for $L
\leq \ell < L'$ contains only the influence of the corrupted samples, 
\begin{equation}
  \shg_\ell^{m} = \inprod{\sum_{(p,q) \in \mathcal{S}} a_p^{(L')} s_{pq} \delta_{\theta_p, \phi_q}, Y_\ell^m}, \quad L \leq \ell < L'.
\end{equation}
Consequently, we can use this part of the spectrum to learn which samples are
corrupted, and by how much. This is the subject of the following proposition.

\begin{prop}
  Let $f$ be a signal on the sphere of bandwidth $L$. Then we can perfectly
  reconstruct $f$ from corrupted samples taken on the grid \eqref{eq:dh_grid},
  as long as the number of corruptions $K$ satisfies
  \begin{equation}
    K \leq L' - L - \sqrt{L' - L + 1} + 1.
  \end{equation}
\end{prop}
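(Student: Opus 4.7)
The strategy is to interpret the corrupted measurements as the sum $g = f + s$, where $s \bydef \sum_{(p,q)\in\mathcal{S}} a_p^{(L')} s_{pq}\, \delta(\cdot;(\theta_p,\phi_q))$ is a $K$-Dirac signal on the sphere, and to recover $s$ from the high-pass band of $\wh{g}$ using the FRI machinery of Theorem \ref{thm:sfri}. Applying the Driscoll--Healy quadrature to the samples yields $\shg_\ell^m$ for all $(\ell,m) \in \mathcal{I}_{L'}$. Since $\shf_\ell^m = 0$ for $\ell \geq L$, the bandpass portion isolates the shot noise exactly: $\shg_\ell^m = \shs_\ell^m$ for $L \leq \ell < L'$. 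Once $s$ is recovered, setting $\shf_\ell^m = \shg_\ell^m - \shs_\ell^m$ for $\ell < L$ produces the exact spectrum of $f$.

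To recover $s$ from its bandpass spectrum, the plan is to adapt Algorithm \ref{alg:sfri}. The first step is a bandpass analogue of Lemma \ref{lem:fullrank}: for each azimuthal order $m$, the identity $\shs_\ell^m = \vc_{\ell m}^\T \vd_m$ gives $L'-\max(L,|m|)$ linear equations on column $m$ of the data matrix $\mDelta$ associated with $s$, and the polynomials $\vc_{\ell m}^\T \vx$ for the available range of $\ell$ have pairwise distinct degrees $\max(L,|m|)-|m|, \ldots, L'-1-|m|$ and are therefore linearly independent. This uniquely determines a contiguous window of $\vd_m$---the rows $\max(0,L-|m|) \leq p \leq L'-1-|m|$ of column $m$. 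Feeding these windows into the annihilating-filter construction of Theorem \ref{thm:sfri} and counting the available length-$(K+1)$ rows of $\mZ$ along the same triangular pattern, one obtains the sufficient condition $(L'-L+1-K)^2 \geq L'-L+1$, which rearranges to the stated bound. Colatitudes of the Diracs then follow from the roots of the recovered annihilating filter, and azimuths and magnitudes from the same steps as in Algorithm \ref{alg:sfri}.

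The main technical obstacle is twofold. First, the annihilating-filter step requires distinct colatitudes among the corrupted grid points, but unlike Algorithm \ref{alg:sfri} one cannot rotate the sphere in the spatial domain since the sampling grid is fixed. The workaround is to apply a random rotation \emph{in the spectral domain} via Wigner $D$-matrices, which preserves the bandpass isolation and almost surely separates the colatitudes by Lemma \ref{lem:almost_surely_distinct}. Second, the shifted Vandermonde structure of $\mZ$ must still have rank $K$; this follows from the argument in Appendix \ref{appendix:rank_of_Z} once one observes that shifting row indices by a constant merely rescales each geometric mode by a fixed power, leaving the rank invariant. Once $s$ is recovered, each Dirac is matched to a grid point $(p_k, q_k)$, the quadrature weight is inverted to extract $s_{p_k q_k}$, and subtraction from $g$ yields $f$ exactly.
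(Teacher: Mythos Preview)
Your proposed ``bandpass analogue of Lemma~\ref{lem:fullrank}'' does not hold for $|m| < L$, and this is the main gap. The polynomials $\vc_{\ell m}^\T \vx$ for $\ell = L, \ldots, L'-1$ are indeed linearly independent with distinct degrees $L-|m|, \ldots, L'-1-|m|$, but each of them is supported on \emph{all} rows $0 \le p \le \ell - |m|$, not just on the upper window you name. The $L'-L$ equations $\wh{s}_\ell^{\,m} = \vc_{\ell m}^\T \vd_m$ therefore involve all $L'-|m|$ entries of $\vd_m$; since $L'-|m| > L'-L$ whenever $|m|<L$, the system is underdetermined and no contiguous window of $\vd_m$ can be isolated. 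Concretely, for $|m| < L$ the low-degree coefficients $\wh{s}_\ell^{\,m}$ with $|m| \le \ell < L$ are inaccessible (they are entangled with the unknown $\shf_\ell^{\,m}$), and without them no portion of that data-matrix column can be recovered.

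The paper's proof avoids this by using \emph{only} the columns with $|m| \ge L$. For those $m$ the signal $f$ contributes nothing to any spectral bin, so $\wh{s}_\ell^{\,m} = \shg_\ell^{\,m}$ for every $\ell$ with $|m| \le \ell < L'$, and Lemma~\ref{lem:fullrank} applies directly to recover the full column $\vd_m$. Counting length-$(K+1)$ segments over $L \le |m| < L'$ gives $2(L'-L-K) + 2(L'-L-K-1) + \cdots + 2 = (L'-L-K)(L'-L-K+1)$, and requiring this to be at least $K$ yields the stated bound. Note also that your own counting is inconsistent with your claimed recoverable region: had the windows for $|m|<L$ really been available (each of length $L'-L$), they would contribute an additional $(2L-1)(L'-L-K)$ rows to $\mZ$, leading to a strictly weaker sufficient condition than the one you state. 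Your spectral-rotation idea for separating colatitudes is a sound and useful addition, but it does not rescue the core argument.
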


\begin{proof}
As discussed in Section \ref{sec:spherefri}, we can use any $m =
\text{const.}$ line in the spectrum to get the rows of the annihilation
matrix. However, we first need to compute the corresponding columns of the
data matrix. From  Fig.~\ref{fig:shot_noise_spectrum}, we see that the middle
columns cannot be used for shot noise removal: we seek columns influenced only
by corruptions. But the middle columns of the data matrix are obtained from
the middle spectral columns (for $m < L$), so they are influenced both by the
desired signal and the corruptions. This means that we can only use spectral
bins for $L \leq m < L'$, as illustrated in
Fig.~\ref{fig:shot_noise_spectrum}. For $m = L$ and $m = -L$, the number of
segments of length $K+1$ that we can get is $L'-L-K$. For $m = L+1$ and $m =
-(L+1)$ it is $L'-L-K-1$, and so on. Summing up we have that the total number
of consecutive segments of length $K+1$ we can use is
\begin{align*}
  \# &= 2(L'-L-K) + 2(L'-L-K-1) + \cdots + 2 \cdot 1 \\
  &= (L'-L-K)(L'-L-K+1).
\end{align*}
We need this number to be at least $K$, because we need $K$ rows in the
annihilation matrix. We thus obtain the claim of the proposition by solving
the inequality $\# \geq K$.
\end{proof}

\begin{figure}
\centering
\includegraphics[width=3.5in]{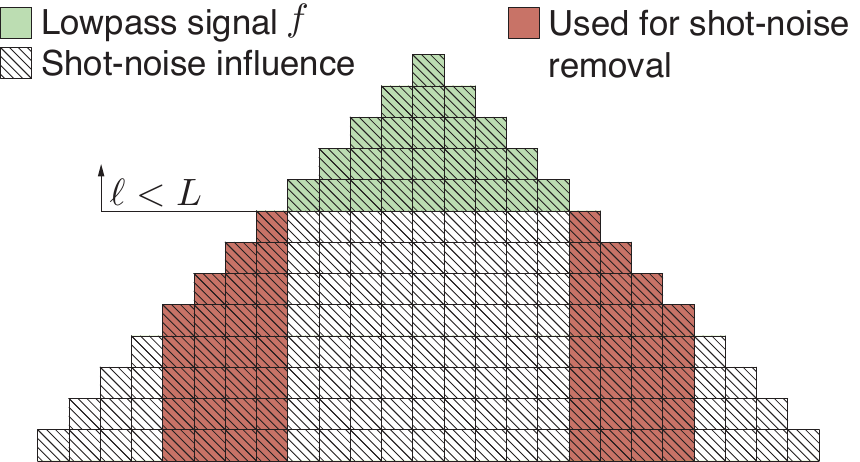}
\caption{Spectrum structure in shot noise removal. Green-shaded
bins get contribution from the desired signal $f$ with bandwidth $L$; hatched
bins are influenced by the shot noise; red-shaded columns are (i) long enough
to annihilate shot noise and (ii) recoverable from the corrupted spectrum.}
\label{fig:shot_noise_spectrum}
\end{figure}

\begin{figure}
  \centering
  \includegraphics[width=3.5in]{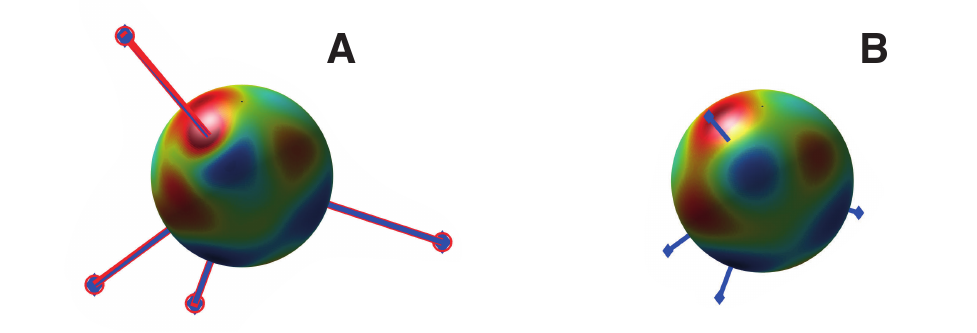}
  \caption{Shot noise removal via spherical FRI, for $L=6$, $L'=12$ and $K=4$
  malfunctioning sensors. Corrupted signal is shown in subfigure A, together
  with the true corruption values (blue diamonds) and the estimated
  corruptions (red circles); same signal with the shot noise removed is shown
  in B, with the correct sample values at the corrupted locations denoted by
  blue diamonds.}
  \label{fig:shot_noise_removal}
\end{figure}

\begin{figure*}[t!]
  \centering
  \includegraphics{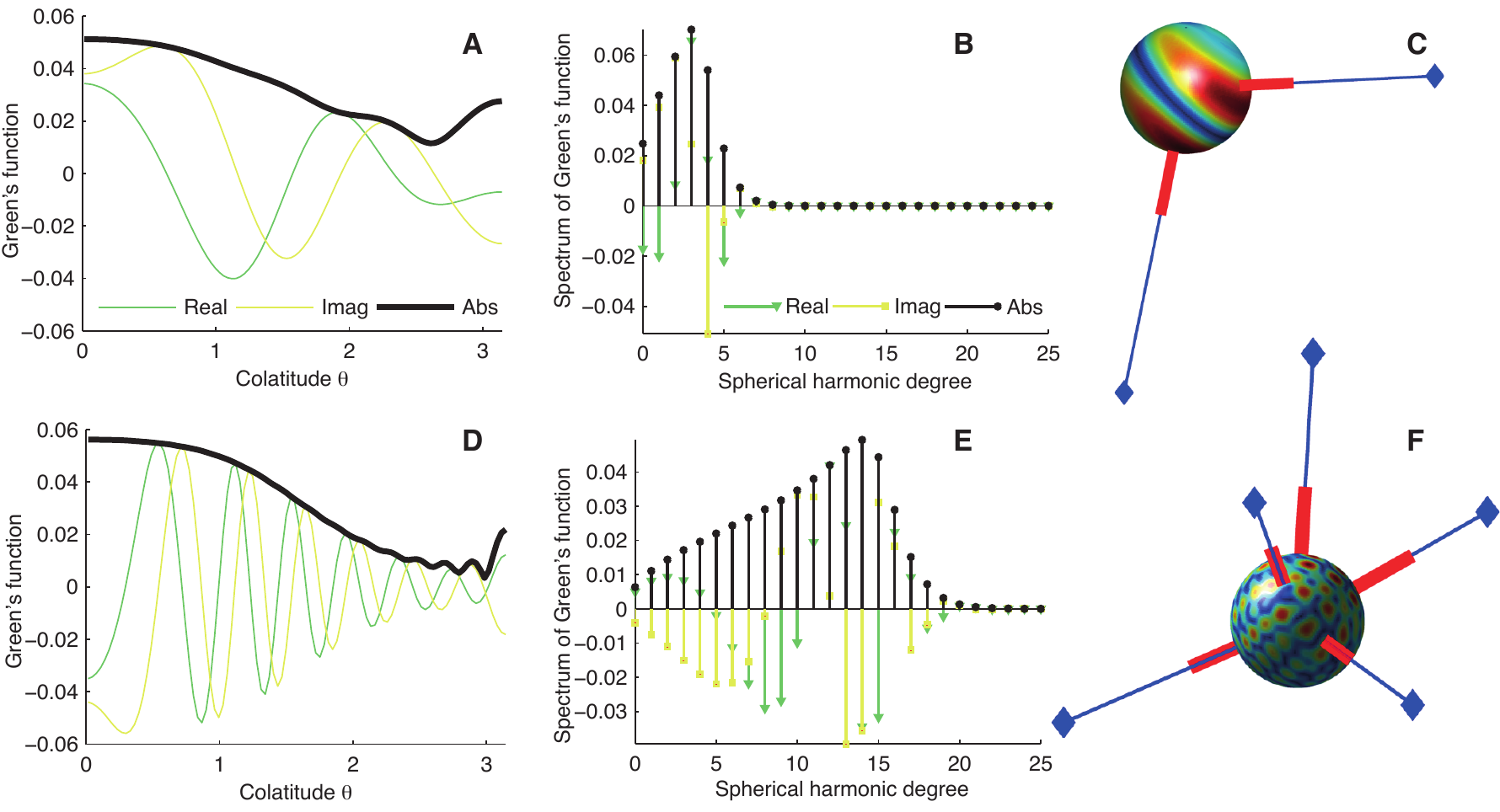}
  \caption{Multiple DOA estimation by a spherical microphone array. First row of
    subfigures corresponds to $f_1 = 1000$ Hz, and second row to $f_2 = 4000$
    Hz. Sphere has a radius $r = 0.2$ m, and the source is located at $[0,\
    0,\ 3]^\T$ m. The real and imaginary parts, and the absolute value of the
    Green's function are shown in subfigures A and D. Real part, imaginary
    part and absolute value of the spectrum are shown in subfigures B and E.
    Subfigures C and F show the simulation results for $K=2$ and $K=5$, and
    random source placement. Blue diamonds represent the source locations, and
    thick red lines show the estimated directions. Size of the sphere is
    exaggerated for the purpose of illustration. The sphere color corresponds
    to the absolute value of the function induced on the sphere by the sources
    (microphones measure samples of this function). The bandwidth was set to
    $L = 12$ at 1000 Hz and to $L = 30$ at 4000 Hz.}
  \label{fig:SSL}
\end{figure*}

After detecting the corrupted readings, we can use the estimated corruption
values to estimate the function. Another option is to simply ignore them
altogether, as we have more samples than the minimum number thanks to
oversampling. A shot noise removal experiment is illustrated in
Fig.~\ref{fig:shot_noise_removal}.

\subsection{Sound Source Localization}
\label{sub:ssl}

Spherical microphone arrays output a time-varying spherical signal. If the
signal is induced by a collection of point sources, we can use the proposed
spherical FRI sampling scheme to estimate the directions-of-arrival (DOAs) of
the sources. For simplicity, we consider the narrowband case, \emph{i.e.}, the
sources emit a single sinusoid.

How does this example fit into our sparse sampling framework? In spherical microphone
arrays, the microphones are distributed on the surface of a sphere, either
open or rigid \cite{Jarrett:2012fi}. Therefore, the microphone signals
represent samples of a time-varying function on $\S^2$. If a sound source
emits a sinusoid, every microphone measures the amplitude and the phase of
that sinusoid shaped by the characteristics of the propagating medium and of
the spherical casing. Equivalently, for every microphone we get a complex
number.

Suppose that a source of unit intensity is located at $\vs$, and that the
microphones are mounted on a rigid sphere of radius $r$ with center at the
origin. The response measured by the microphone at $\vr$, such that $\norm{\vr}
= r$, is given by the corresponding Green's function. For a wavenumber $\kappa = 2\pi
\nu / c$, where $\nu$ is the frequency and $c$ is the speed of sound, the Green's
function is \cite{Jarrett:2012fi}
\begin{equation}
  \label{eq:green}
  g(\vr|\vs, \kappa) = \frac{\I k}{4\pi} \sum_{\ell=0}^{\infty} b_\ell(\kappa r) h_\ell^{(1)}(\kappa s)(2\ell+1) P_{\ell}(\cos \alpha_{\vr \vs}),
\end{equation}
where $h_{\ell}^{(1)}$ is the spherical Hankel function of the first kind and
of order $\ell$, $P_\ell$ is the Legendre polynomial, and $\cos \alpha_{\vr
\vs}  = \frac{1}{rs}\inprod{\vr, \vs}$. Mode strength $b_{\ell}(kr)$ is
defined as
\begin{equation}
  b_{\ell}(\kappa r) \bydef j_{\ell}(\kappa r) - \frac{j_{\ell}^{\prime}(\kappa r)}{h_{\ell}^{(1) \prime}(\kappa r)}h_{\ell}^{(1)}(\kappa r),
\end{equation}
where $j_{\ell}$ is the spherical Bessel function\footnote{We use the standard
symbol $j_\ell$ for the spherical Bessel function. Note the subtle difference
from the imaginary unit $\I$.} of order $\ell$, and prime $(\,
\cdot \,)^{\prime}$ denotes the derivative with respect to the argument.

The Green's function $g$ should be seen as a filter that describes how the
point source's influence spreads over the sphere. It is shown for two
different frequencies in Figs. \ref{fig:SSL}A and \ref{fig:SSL}D, while the
corresponding spectra are given in Figs. \ref{fig:SSL}B and
\ref{fig:SSL}E. We see that the absolute pressure on the sphere has a
similar shape for both frequencies, but the real and imaginary parts vary
faster at higher frequencies, implying higher bandwidth. In both cases we
observe that the Green's function is approximately bandlimited.

Assume now that there are $K$ sound sources at locations $\set{\vs_k}_{k=1}^K$,
with complex intensities $\set{\alpha_k}_{k=1}^K$. The resulting measurement by
a microphone at point $\vr$ is
\begin{equation}
  \label{eq:K-sources}
  f(\vr) = \sum_{k=1}^K \alpha_i g(\vr|\vs_k, \kappa). 
\end{equation}

If all the source locations $\vs_k$ are at the same distance from the sphere,
then the Green's function \eqref{eq:green} depends only on the angle between
$\vr$ and $\vs$. For some fixed source distance $d$, we can define
$h_\mathrm{SSL}(\xi) \bydef g(\vx_\xi r| \vx_\eta d, \kappa)$, where
$\vx_\xi$ denotes the unit vector corresponding to $\xi$, $\vx_\eta$ the
unit vector corresponding to the north pole $\eta$, and the subscript SSL
stands for \emph{sound source localization}. Then
\eqref{eq:K-sources} corresponds to a weighted sum of $K$ rotations of a known
template function $h_\mathrm{SSL}$,
\begin{equation}
  f(\xi) = \sum_{k=1}^K \alpha_k h_\mathrm{SSL}(\varrho_k^{-1} \circ \xi).
\end{equation}

\begin{figure}[t]
  \centering
  \includegraphics{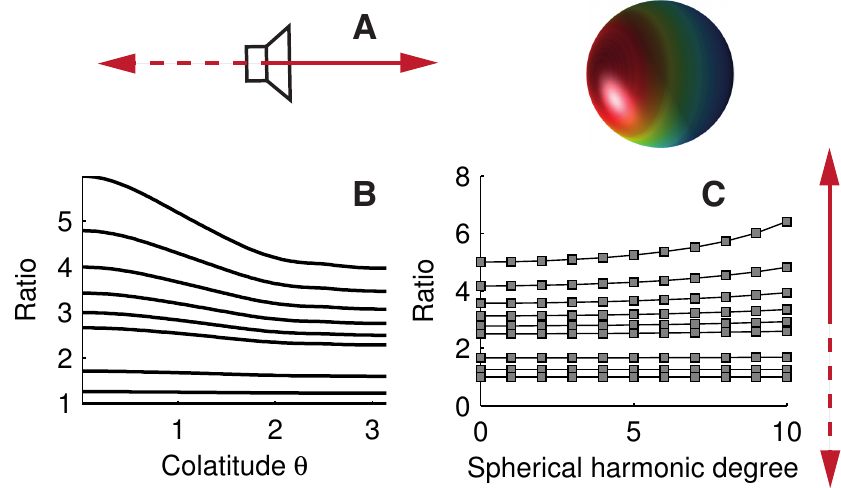}
  \caption{Ratios of Green's functions. We computed the Green's function for nine different source distances (1.0 m, 1.2 m, 1.4 m, 1.6 m, 1.8 m, 2.0 m, 3.0 m, 4.0 m, 5.0 m). Then we plotted the magnitude of the ratio of the Green's function at each distance and the Green's function at the largest distance (5 m), both in space (B) and in the spectrum (C). The more parallel the ratio curve is with the abscissa axis, the more similar the Green's function at that distance is to the Green's function at 5 m. Curves are plotted in the order of increasing distance in the direction of the dashed arrow (up to down), as indicated in (A).}
  \label{fig:ratios}
\end{figure}

As it is unrealistic to assume that the sources are all at the same distance,
we hope that the shape of $g(\vr | \vs, \kappa)$ does not (strongly) depend on
$\norm{\vs}$. Indeed, it turns out that the shape is approximately preserved
within a certain range, as illustrated in Fig.~\ref{fig:ratios}. We therefore
suppress the dependency of $g$ on $\norm{\vs}$ and approximate
\eqref{eq:K-sources} as follows,

\begin{equation}
  \begin{aligned}
    f(\xi)
    &= \sum_{k=1}^K \alpha_k g(\xi | \vs_k, \kappa) \\
    &\approx \sum_{k=1}^K \wt{\alpha}_k h_\mathrm{SSL}(\varrho_k^{-1} \circ \xi) \\
    &= \left[ \sum_{k=1}^K \wt{\alpha}_k \delta(\xi; \xi_{\vs_k}) \right]
    \conv h_\mathrm{SSL}(\xi).
  \end{aligned}
\end{equation}
Here, we absorbed $\alpha_k$ and additional (complex) scaling due to different
distances into $\wt{\alpha}_k$, and $h_\mathrm{SSL}$ is computed at some
predefined {\em mean} distance.

We thus reduced the sound source localization problem to a problem of finding
the parameters of a weighted sum of Diracs. In order to apply our spherical FRI
algorithm, we need to verify that $g$ is bandlimited on the sphere. Figs.
\ref{fig:SSL}D and \ref{fig:SSL}E show that it is indeed approximately
bandlimited, and that the bandwidth depends on the frequency (it also depends
on the sphere radius).

Figs. \ref{fig:SSL}C and \ref{fig:SSL}D show an example of recovering two
sources at 1000 Hz and 5 sources at 4000 Hz using the proposed spherical
sparse sampling scheme. It is worth noting that this succeeds in spite of the
model mismatch due to varying source distances. This indicates the robustness
of the proposed reconstruction algorithm.

\section{Conclusion}
\label{sec:conclusion}

We presented a new sampling theorem for sparse signals on the sphere. In
particular, by leveraging ideas from finite rate-of-innovation sampling, we
showed how to reconstruct sparse collections of spikes on the sphere from their
lowpass-filtered observations. Compared to existing sparse sampling schemes on
the sphere, we use the available spectrum more efficiently by generalizing
known results on 2D harmonic retrieval, thereby reducing the number of samples
required to reconstruct the parameters of the spikes.
 
We illustrated the usefulness of our algorithm by using it to solve three
problems: sampling diffusion processes, shot noise removal, and sound source
localization. But there is a wealth of other applications, for example in
astronomy. Just think about the numerous spherical signal processing
challenges put forward by the square kilometer array (SKA) project
\cite{Dewdney:2009bq}.

We mentioned some approaches to estimation from noisy samples, but more
efficient denoising schemes should be studied. One example, effective in the
Euclidean setting, is the Cadzow denoising algorithm \cite{Cadzow:1988gp}. The
problem seems more challenging on the sphere; in particular, the annihilating
matrix is block-Hankel, rather than Hankel.

%------------------------------------------------------------

\appendix
\section*{}

%------------------------------------------------------------

\subsection{Annihilating Property}

\label{appendix:annihilation_property}

For the sake of completeness, we show in this appendix that the annihilation
filter annihilates linear combinations of exponentials. We compute the
response of the filter $H(z)$ in \eqref{eq:annihilation_filter} to a signal of
the form $y_n = \sum_{k=1}^K b_k x_k^n$ as
\begin{align*}
  \label{eq:annihilation_explanation}
  (y \conv h)_n
  &= \sum_{m = 0}^K y_{n-m} h_m  
  = \sum_{m = 0}^K \left( \sum_{k = 1}^K b_k x_k^{n-m} \right) h_m \nonumber \\
  &= \sum_{k=1}^K x_k^n b_k \sum_{m=0}^K h_{m}x_k^{-m} \nonumber 
  = \sum_{k=1}^K x_k^n b_k \prod_{i=1}^K (1 - x_k x_i^{-1}) \nonumber \\
  &= 0.
\end{align*}

%------------------------------------------------------------

\subsection{Computation of the Cram\'er-Rao Lower Bound}
\label{appendix:crlb_computation}

A lowpassed collection of $K$ Diracs can be written as follows,

\begin{equation}
  f(\theta, \phi) = \sum_{\ell = 0}^{L-1} \sum_{m=-\ell}^{\ell} \left( \sum_{k=1}^K \alpha_k \overline{Y_{\ell}^m(\theta_k, \phi_k)} \right) Y_{\ell}^m(\theta, \phi).
\end{equation}
In the remainder of this section, we assume $K=1$, so we rewrite the function as
\begin{equation}
  f(\theta, \phi) = \sum_{\ell = 0}^{L-1} \sum_{m=-\ell}^{\ell} \alpha_0 \overline{Y_{\ell}^m(\theta_0, \phi_0)}Y_{\ell}^m(\theta, \phi).
\end{equation}

We take samples on the sphere at the locations $\set{(\theta_n,
\phi_n)}_{n=1}^N$. The $n$th sample is given by
\begin{equation}
  \label{eq:noisy_samples}
  \mu_n = f(\theta_n, \phi_n) + \varepsilon_n
\end{equation}
where $\varepsilon_n \sim {\cal N}(0, \sigma^2)$, and they are iid. By
$\bm{\zeta} = [\alpha_0, \ \theta_0, \ \phi_0]^\T$, we denote the vector of
parameters we estimate. To make the dependence on
$\bm{\zeta}$ explicit, we rewrite \eqref{eq:noisy_samples} slightly as
\begin{equation}
  \mu_n = f_n(\bm{\zeta}) + \varepsilon_n.
\end{equation}

With this notation in hand, we can write the conditional probability density
function of the $n$th measurement as
\begin{equation}
  p(\mu | \bm{\zeta}) = \frac{1}{\sqrt{2 \pi \sigma^2}} e^{-[\mu - f_n(\bm{\zeta})]^2 / (2\sigma^2)},
\end{equation}
so that the log-likelihood function is
\begin{align}
  L(\bm{\zeta}) &\bydef \ln p(\mu_1, \ldots, \mu_N | \bm{\zeta}) \nonumber \\
  &= \sum_{n=1}^N \left[-\tfrac{1}{2} \ln (2\pi\sigma^2) - (\mu_n\!-\!
    f_n(\bm{\zeta}))^2/(2\sigma^2) \right].
\end{align}
Consequently, differentiating $L$ with respect to any entry $\zeta_i$ of
$\bm{\zeta}$ gives
\begin{equation}
  \parder{L}{\zeta_i} = \frac{1}{\sigma^2} \sum_{n=1}^N \varepsilon_n \parder{f_n(\bm{\zeta})}{\zeta_i}.
\end{equation}

We can compute the three required derivatives
\begin{align*}
  &\parder{f_n(\bm{\zeta})}{\alpha_0} = \sum_{\ell=0}^{L-1} \sum_{\abs{m} \leq \ell}
   \overline{Y_{\ell}^m(\theta_0, \phi_0)}Y_{\ell}^m(\theta_n, \phi_n), && \\
  &\parder{f_n(\bm{\zeta})}{\theta_0} = \alpha_0 \sum_{\ell=0}^{L-1} \sum_{\abs{m} \leq \ell}
  \bigg[ m \cot \theta_0 \overline{Y_{\ell}^m(\theta_0, \phi_0)} && \\
  & + \sqrt{(l-m)(l+m+1)}e^{\I\phi_0} \overline{Y_{\ell}^{m+1}(\theta_0,\phi_0)} \bigg] Y_{\ell}^m(\theta_n, \phi_n), && \\
  &\parder{f_n(\bm{\zeta})}{\phi_0} = \alpha_0 \sum_{\ell=0}^{L-1} \sum_{\abs{m} \leq \ell}
   (-\I m) \overline{Y_{\ell}^m(\theta_0, \phi_0)}Y_{\ell}^m(\theta_n, \phi_n).&&
\end{align*}

Now $\nabla L = \left[ \parder{L}{\alpha_0}, \ \parder{L}{\theta_0},
  \ \parder{L}{\phi_0} \right]^\T$, and the Fisher information matrix is
\begin{flalign*}
  \mI(\bm{\zeta}) \bydef \expect \left[\nabla L(\bm{\zeta}) \nabla L(\bm{\zeta})^H\right]
  = \frac{1}{\sigma^2} \sum_{n=1}^N \nabla f_n(\bm{\zeta}) \nabla f_n(\bm{\zeta})^H.
\end{flalign*}
Let $\wh{\bm{\zeta}}$ be any unbiased estimator of the parameters $\bm{\zeta}$.
The CRLB can then be computed as

\begin{equation}
  \mathrm{cov}(\wh{\bm{\zeta}}) \succeq \mI(\bm{\zeta})^{-1}.
\end{equation}
%
%------------------------------------------------------------

\subsection{Rank of the annihilating matrix}

\label{appendix:rank_of_Z}

In this appendix, we show that the rank of the annihilating matrix $\mZ$
\eqref{eq:annihilating_matrix} is $K$ with probability one, as soon as it has
at least $K$ rows. It then follows follows that the annihilating filter $\vh$
is uniquely determined, up to a scaling factor, by solving $\mZ \vh = \vzero$.

Consider the factorization $\mDelta = \mX \mA \mU$,
\renewcommand{\arraycolsep}{0pt}
\begin{equation*}
  \mDelta = 
  \begin{bmatrix}
    1   &  \cdots & 1   \\
    x_1 &  \cdots & x_k \\
    \vdots &  &   \vdots \\
    x_1^{L-1} & \cdots & x_K^{L-1}
  \end{bmatrix}
  \begin{bmatrix}
    \alpha_1 & & \\
    & \ddots & \\
    & & \alpha_K
  \end{bmatrix}
  \begin{bmatrix}
    u_{1,1-L} & \cdots & 1 & \cdots & u_{1,L-1} \\
    \vdots & & & & \vdots \\
    u_{K,1-L} & \cdots & 1 & \cdots & u_{1,L-1}
  \end{bmatrix},
\end{equation*}
where $x_k = \cos \theta_k$ and $u_{k, m} = (\sin \theta_k)^\abs{m} e^{-\I m
\phi_k}$.

To construct the annihilating matrix $\mZ$ as in equation
\eqref{eq:annihilating_matrix}, we create Hankel blocks from columns of
$\mDelta$. The $(L-K) \times (K+1)$ Hankel block corresponding to the middle
($m=0$) column of $\mDelta$ can be factored as
\renewcommand{\arraycolsep}{2pt}
\begin{align}
  \mB_0 &= 
  \begin{bmatrix}
    x_1^{L-K-1} & \cdots & x_K^{L-K-1} \\
    \vdots &  &   \vdots \\
    x_1 &  \cdots & x_K \\
    1   &  \cdots & 1 
  \end{bmatrix}
  \begin{bmatrix}
    \alpha_1 & & \\
    & \ddots & \\
    & & \alpha_K
  \end{bmatrix}
  \begin{bmatrix}
    x_1^K & \cdots & x_1^0 \\
    \vdots  & & \vdots \\
    x_K^K & \cdots & x_K^0
  \end{bmatrix} \nonumber \\
  &\bydef \mX_{0} \mA \mXi.
\end{align}
The second block of the annihilating matrix obtained from the column
corresponding to $m=-1$ is similar,
\begin{equation}
  \mB_{-1} = \mX_{1} \mY_{-1} \mA  \mXi,
\end{equation}
where $\mY_{-1} \bydef \diag(u_{1,-1}, \ldots, u_{K,-1})$, and $\mX_m$ is
obtained by removing the $m$ leading rows from $\mX_0$. Then we can write
\begin{equation}
\begin{aligned}
  \mZ &= [\mB_0^\T,\ \mB_{-1}^\T,\ \mB_1^\T,\ \ldots,\ \mB_{K-L+1}^\T,\ \mB_{L-K-1}^\T]^\T \\
  &=
  \left[  
  \begin{array}{lclcl}
    \mX_{0} &\cdot& \mI &\cdot& \mA \mXi \\
    \mX_{1} &\cdot& \mY_{-1} &\cdot&\mA  \mXi \\
    \mX_{1} &\cdot& \mY_{1}  &\cdot& \mA  \mXi \\
      & & \vdots & & \\
    \mX_{L-K-1} &\cdot& \mY_{K-L+1} &\cdot & \mA  \mXi\\
    \mX_{L-K-1} &\cdot& \mY_{L-K-1} &\cdot & \mA  \mXi
  \end{array}
  \right],
\end{aligned}
\end{equation}
with the $\mA \mXi$ factor being common for all row-blocks. We want to show
that the nullspace of $\mZ$ has dimension one. To that end, we just need to
establish that the following matrix,
\begin{equation}
    \mT = 
    \left[
    \begin{array}{lcl}
    \mX_{0} &\cdot& \mI  \\
    \mX_{1} &\cdot& \mY_{-1}  \\
    \mX_{1} &\cdot& \mY_{1} \\
     & \vdots & \\
    \mX_{L-K-1} &\cdot& \mY_{K-L+1} \\
    \mX_{L-K-1} &\cdot& \mY_{L-K-1}
  \end{array}
  \right]
\end{equation}
has full column rank. To see why this is the case, let $\vv$ be a non-zero
vector such that $\vzero = \mZ \vv = \mT (\mA \mXi \vv)$. It then follows from
the full-rankness of $\mT$ that $\mA \mXi \vv = \vzero$. Since $\mA$ is a
diagonal matrix with non-zero entries on the diagonal and $\mXi$ is a $K
\times (K+1)$ Vandermonde matrix with distinct roots, the vector $\vv$ is
uniquely determined up to a multiplicative factor. We now show that the matrix
$\mT$ indeed has full column rank almost surely.

Any column in $\mT^\T$ is of the form
\begin{equation}
  \begin{bmatrix}
    (\cos \theta_1)^r (\sin \theta_1)^\abs{s} e^{\I \phi_1 s} \\
    \vdots \\
    (\cos \theta_K)^r (\sin \theta_K)^\abs{s} e^{\I \phi_K s}
  \end{bmatrix},
\end{equation}
where $0 \leq r < L - K - \abs{s}$ and $-(L-K-1) \leq s \leq L-K-1$. If the
locations of the Diracs are random, we can use the following lemma to show
that the matrix $\mT$ will have full column rank with probability one.

\begin{lem}
  \label{lem:almost_surely_fullrank}
  Draw $[\xi_k = (\theta_k, \phi_k)]_{k=1}^K$ independently at random from any
  absolutely continuous probability distribution on $\mathcal{R} = [0, \pi] \times [0,
  2\pi]$ (w.r.t. Lebesgue measure). Let $\mathcal{M} = \set{(r_1, s_1),
  \ldots, (r_N, s_N)}$ be a set of distinct integer pairs and let $\mG = [g_{pq}]$,
  where $g_{pq} = (\cos \theta_p)^{r_q} (\sin \theta_p)^\abs{s_q} e^{\I \phi_p
  s_q}$. Then $\mG$ has full rank almost surely.
\end{lem}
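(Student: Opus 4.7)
The strategy is the classical one for almost-sure full-rankness of matrices built from analytic functions of random parameters: reduce to showing that the determinant of some square submatrix is not identically zero as an analytic function of the draws, and then invoke the fact that the zero set of a nontrivial real-analytic function has Lebesgue measure zero. Since the $\xi_k$ are independent with absolutely continuous marginals, their joint law is absolutely continuous on $\mathcal{R}^K$, so any Lebesgue-null set in parameter space is a null event.

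First, I would assume without loss of generality that $N \le K$; otherwise I swap the roles of rows and columns, since ``full rank'' means rank $\min(K, N)$ in either case. It then suffices to show that the determinant of the top $N \times N$ submatrix $\mG'$ of $\mG$ is nonzero almost surely. The function
\begin{equation*}
  \Phi(\theta_1, \phi_1, \ldots, \theta_N, \phi_N) \bydef \det \mG'
\end{equation*}
is real-analytic on $\mathcal{R}^N$ because each entry $(\cos \theta_p)^{r_q}(\sin \theta_p)^{\abs{s_q}} e^{\I s_q \phi_p}$ is; writing $\Phi = \Phi_R + \I \Phi_I$, the zero set $\set{\Phi = 0}$ is contained in the zero set of whichever of $\Phi_R, \Phi_I$ is not identically zero, and is therefore Lebesgue-null as soon as $\Phi \not\equiv 0$. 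So the only thing left to verify is that $\Phi \not\equiv 0$.

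The crux is to prove that the $N$ column-generating functions
\begin{equation*}
  f_q(\theta, \phi) \bydef (\cos \theta)^{r_q} (\sin \theta)^{\abs{s_q}} e^{\I s_q \phi}, \quad q = 1, \ldots, N,
\end{equation*}
are linearly independent on $(0, \pi) \times [0, 2\pi)$. Suppose $\sum_q c_q f_q \equiv 0$. Grouping by the value of $s_q$ and using linear independence of the Fourier characters $\set{e^{\I s \phi}}_{s \in \Z}$, each partial sum $\bigl( \sum_{q : s_q = s} c_q (\cos \theta)^{r_q} \bigr) (\sin \theta)^{\abs{s}}$ must vanish identically in $\theta$. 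Dividing by $(\sin \theta)^{\abs{s}} \neq 0$ on $(0, \pi)$ leaves a polynomial in $\cos \theta$ whose exponents $\set{r_q : s_q = s}$ are distinct (because the pairs $(r_q, s_q)$ themselves are distinct), forcing each $c_q = 0$. A standard inductive construction (adjoining one linearly independent function at a time and picking an evaluation point at which the enlarged determinant does not vanish) then yields $N$ evaluation points at which $[f_q(\theta^{(i)}, \phi^{(i)})]_{i,q}$ is invertible, so $\Phi \not\equiv 0$ and the measure-zero conclusion closes the argument.

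The main obstacle, in my view, is the linear independence step: the mixed polynomial-in-$\cos \theta$-times-power-of-$\sin \theta$-times-character-in-$\phi$ form of $f_q$ would be awkward to treat by a single Vandermonde-type argument. The trick is to first use linear independence of the Fourier characters in $\phi$ to collapse the problem to independent one-variable polynomial problems in $\cos \theta$, after which distinctness of the pairs $(r_q, s_q)$ supplies exactly the distinct-exponent condition needed within each block.
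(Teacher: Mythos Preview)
Your argument is correct, and its core inductive idea matches the paper's, but the organization is somewhat different. The paper follows Bass--Gr\"ochenig directly: it runs a measure-theoretic induction on the size $M$ of the leading minor, showing at each step that, conditionally on the first $M$ draws landing outside the bad set $\mathcal{B}_M$, the set of $\xi_{M+1}$ that makes $\mG_{M+1}$ singular is the zero locus of a nontrivial generalized trigonometric polynomial and hence Lebesgue-null. You instead front-load an algebraic step---proving linear independence of the column functions $f_q$ via the $\phi$-Fourier decomposition followed by the $\cos\theta$-polynomial argument---then use that to manufacture one set of evaluation points at which the square determinant is nonzero, and finally invoke once the global fact that a nontrivial real-analytic function has a null zero set. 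Your route is a bit more modular (separating the algebraic nondegeneracy from the measure-theoretic conclusion) and makes the role of the distinct-pair hypothesis more transparent; the paper's version avoids naming the real-analytic zero-set lemma explicitly and stays closer to the cited reference. One small quibble: in the case $N>K$ you cannot literally ``swap rows and columns,'' since rows carry the random parameters and columns the fixed index pairs; the correct reduction is simply to restrict to any $K$ of the $N$ columns and apply your $N\le K$ argument to that square block.
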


\begin{proof}
This proof is parallel to that of Theorem 3.2 from \cite{Bass:2005dr}. Let
$\mG_M$ be the upper left $M \times M$ minor of $\mG$. We define the
\emph{bad} set $\mathcal{B}_M$ as the set on which $\mG_M$ is singular,
\begin{equation}
  \mathcal{B}_M = \set{(\xi_1, \ldots, \xi_M) \in \mathcal{R}^M \ | \ \det \mG_M = 0}.
\end{equation}
The goal is to show that $\mu(\mathcal{B}_K) = 0$, where $\mu$ is the Lebesgue
measure on $\mathcal{R}^{K}$. We proceed by induction on $M$; for $M=1$, we have that
\begin{equation*}
  \mG_1 = [(\cos \theta_1)^{r_1} (\sin \theta_1)^\abs{s_1} e^{\I \phi_1 s_1}], % \nequiv 0, \ \theta_1 \notin \set{0, \pi}.
\end{equation*}
which is non-zero almost surely, so the claim holds. Assume now that $M <
\min(K, N)$ and that the bad set $\mathcal{B}_M$ has measure zero. Let
$(\xi_1, \ldots, \xi_M) \notin \mathcal{B}_M$, \emph{i.e.}, $\mG_M$ is
invertible. Because it is invertible, there exists a unique coefficient vector
$\vb = \vb(\xi_1, \ldots, \xi_M)$ such that
\begin{equation}
  \label{eq:linear_combination_rank_of_Z}
  \mG_M \vb = \vg_{M+1},
\end{equation}
where by $\vg_{M+1}$ we denote the first $M$ entries of the last column of $\mG_{M+1}$.
The bigger matrix $\mG_{M+1}$ will be singular if and only if \emph{the same}
linear combination is also consistent with its $(M+1)$st row. In other
words, $\mG_{M+1}$ is invertible if and only if $\xi_{M+1}$ is not in the set
\begin{align*}
&\mathcal{Z}_{M}(\xi_1, \ldots, \xi_M) = \bigg\{ (\theta, \phi) = \xi \in \mathcal{R} \ \bigg| \\ 
& (\cos \theta)^{r_{M+1}} (\sin \theta)^\abs{s_{M+1}} e^{\I \phi s_{M+1}} \\ 
& \hspace{3cm} = \sum_{i=1}^M b_i (\cos \theta)^{r_i} (\sin \theta)^\abs{s_i} e^{\I \phi s_i}\bigg \}.
\end{align*}
For fixed $(\xi_1, \ldots, \xi_M)$, this is the set of zeros of a particular
(generalized) trigonometric polynomial, thus it has measure zero. Note that
the definition of $\mathcal{Z}_M$ makes sense only for $(\xi_1, \ldots, \xi_M)
\notin \mathcal{B}_M$, as otherwise $\mG_M$ is not invertible. Thus, the
solution $\vb$ to \eqref{eq:linear_combination_rank_of_Z} may not exist.

Consider now the following two sets:
\begin{equation*}
	\mathcal{U}_{M+1} \bydef \set{(\xi_1, \ldots, \xi_{M+1}) \ | \ (\xi_1, \ldots, \xi_M) \in \mathcal{B}_M, \xi_{M+1} \in \mathcal{R}}
\end{equation*}
and
\begin{equation*}
	\mathcal{V}_{M+1} \bydef \set{(\xi_1, \ldots, \xi_{M+1}) \ | \ (\xi_1, \ldots, \xi_M) \in \mathcal{R}^M, \xi_{M+1} \in \mathcal{Z}_{M}}.
\end{equation*}
The bad set $\mathcal{B}_{M+1}$ must be a subset of the set $\mathcal{U} \cup
\mathcal{V}$. But we just showed that the set $\mathcal{V}$ has measure zero;
by the induction hypothesis, $\mathcal{U}$ also has measure zero. Thus their
union, too, has measure zero.

It follows that $\mathcal{B}_{M+1}$ has measure zero. Finally, because the
distributions of $\xi_i$ are absolutely continuous w.r.t. the Lebesgue
measure, so is their product distribution. Hence the probability that $(\xi_1,
\ldots, \xi_K)$ lies in the zero-measure set $\mathcal{B}_K$ is zero.
\end{proof}

To complete the argument, note that the matrix $\mT^\T$ has the same form as
the matrix $\mG$ in the statement of Lemma \ref{lem:almost_surely_fullrank},
with $0 \leq r < L - K - \abs{s}$ and $-(L-K-1) \leq s \leq L-K-1$. Thus, the
columns of $\mT$ are independent with probability one, provided that its
number of rows is at least $K$.

\section*{Acknowledgement}

We thank Martin Vetterli for his support and mentoring.

%--------------------------------------------------------------------------------

\bibliographystyle{IEEEbib}
\bibliography{./spherical-fri-tsp}

\end{document}